\documentclass[a4paper,UKenglish,cleveref, autoref, thm-restate]{lipics-v2021}

\usepackage{amssymb}
\usepackage{mathtools}
\usepackage{graphicx}
\usepackage[ruled,linesnumbered,vlined]{algorithm2e}
\usepackage{microtype}
\usepackage{amsmath}
\usepackage{amsthm}
\usepackage{enumerate}
\usepackage{wrapfig}
\usepackage{multirow}
\usepackage{nicefrac}
\usepackage[colorinlistoftodos,prependcaption,textsize=small]{todonotes}
\usepackage{multicol}
\usepackage{thm-restate}
\usepackage{microtype}
\usepackage{url}
\usepackage{wrapfig}
\usepackage{multirow}
\usepackage{algorithm2e}
\usepackage{algorithm2e}
\usepackage{makecell}
\usepackage{float}

\definecolor{forestgreen}{rgb}{0.13, 0.55, 0.13}

\SetCommentSty{mycommfont}

\newtheorem{problem}{Problem}

\graphicspath{{./figures/}}

\def\reals{{\mathbb R}}

\def\eps{{\varepsilon}}

\newcommand{\frechet}{Fr\'echet}
\newcommand{\df}{d_{F}}
\newcommand{\dfd}{\text{d}_{dF}}

\def\NNG{\text{NNG}}
\DeclareMathOperator{\MST}{MST}
\DeclareMathOperator{\TSP}{TSP}

\def\b{\textbf{b}}
\def\Geps{G_{\eps}}
\def\Gepso{G_{\eps^*}}

\def\dfn#1{\emph{\textcolor{forestgreen}{\textbf{#1}}}}

\date{}

\newcommand{\old}[1]{{{}}}
\newcommand{\temp}[1]{{{}}}
\newcommand{\fullversion}[1]{{{}}}

\newcommand{\oldparagraph}[1]{\vspace{5pt}\noindent\textbf{#1}}

\title{On \frechet\ Traveling Salesmen Problems}

\author{Omrit Filtser} {Department of Mathematics and Computer Science, The Open University of Israel, Ra'anana, Israel} {omrit.filtser@gmail.com} {https://orcid.org/0000-0002-3978-1428}{}

\author{Tzalik Maimon} {Department of Mathematics and Computer Science, The Open University of Israel, Ra'anana, Israel} {post.tmx@gmail.com} {https://orcid.org/0009-0000-8459-8553}{}

\author{Michal Moiseev} {Department of Mathematics and Computer Science, The Open University of Israel, Ra'anana, Israel} {mikkamois@gmail.com} {}{}

\authorrunning{O. Filtser, T. Maimon and M. Moiseev} 

\Copyright{Omrit Filtser, Tzalik Maimon and Michal Moiseev} 

\ccsdesc[500]{Theory of computation~Computational geometry}

\keywords{Fr\'echet distance, traveling salesman problem}

\relatedversion{}

\acknowledgements{We would like to thank an anonymous reviewer for helpful comments and references, and in particular for suggesting the expected linear time algorithm in Section 3.}

\nolinenumbers

\begin{document}

\maketitle

\begin{abstract}
    The \frechet\ distance is a well-studied distance measure between two curves.
    In this work, we demonstrate that the merit of \frechet\ distance extends beyond evaluating similarity, and introduce a new setting in which it proves useful.
    Consider a situation where two agents are required to visit a given set of sites, while staying close to each other throughout their traversal. In this paper, we study problems where the goal is to construct two curves whose vertices are from a given set of points, under the constraint that the \frechet\ distance between the curves is kept as small as possible. This problem can be viewed as a variant of the Traveling Salesman Problem (TSP), and thus may be of interest in routing, network planning and more.
    We present a near-linear algorithm for this problem under the discrete \frechet\ distance, and explore several variants of the problem, including minimizing the lengths of the curves and balancing the number of sites assigned to each agent. Lastly, we prove that the problem is NP-hard under the continuous \frechet\ Distance. 
\end{abstract}

\newpage

\section{Introduction}
The traveling salesman problem (TSP) asks for the shortest route that visits a given set of points. It is a classic NP-hard problem, and have various approximation algorithms (including a PTAS in the Euclidean plane~\cite{Arora98,Mitchell99}). In some generalizations of TSP, e.g. the Vehicle Routing Problem or Multiple-TSP (see, e.g. \cite{BEK2006,BBJWG20,CK2021,MDMM21}), the goal is to plan short routes for a group of agents to visit all the points together. In these variants there is no restriction on the relations between the routes. 

Consider a situation where two agents are asked to visit a given set of sites, they can split but they still need to remain close to each other throughout their motion. This can be required for example in order to keep then in some range of communication, or so that one agent can quickly get to other in case of trouble. Therefore, in this paper, we introduce and study a new set of problems where the goal is to plan routes for two (or more) agents, that together visit a given set of sites, while keeping the agents close to each other throughout the traversal. More precisely, our goal is to construct two curves from a given set of points, such that the (continuous or discrete) \frechet\ distance between the curves is as small as possible. We also consider variations of this \frechet-TSP type problem where the goal is to balance the load on the agents, in different ways.

The \frechet\ distance~\cite{F1906}, introduced by \textit{Maurice \frechet}, is a popular measure of similarity between curves. It is often described by an analogy of a person and a dog connected by a leash, both walking forward along two separate curves, while the leash keeps them at a bounded distance. The \frechet\ distance is then the shortest length of a leash that allows them to traverse their respective curves. The \frechet\ distance takes into account both the position and ordering of points, distinguishing it from other metrics like the Hausdorff distance. This property makes the \frechet\ distance useful across a wide range of applications.

The discrete \frechet\ distance focuses solely on the points of the curves rather than the edges between them. This variant can be described analogously by replacing the man and dog by two frogs that are ``hopping'' along the vertices of the curves while attempting to maintain a small distance between them. This discrete approach allows simpler and (slightly) faster algorithms for computing the distance.

Alt and Godau~\cite{AHGM1995} showed that the \frechet\ distance between a curve $P$ of length $n$ and a curve $Q$ of length $m$, can be calculated in $O(nm\log(nm))$ time. Eiter and Mannila~\cite{EM94} showed that the discrete \frechet\ distance can be computed in $O(mn)$ time. Only 20 years later, it has been shown that under the Strong Exponential Time Hypothesis (SETH), the \frechet\ distance cannot be computed in strongly subquadratic time \cite{B2014}. Very recently, Cheng, Huang, and Zhang~\cite{CHZ25}, presented a strong subquadratic time algorithm that computes a $(7+\eps)$-approximation for both the discrete and continuous distance.

A plethora of applications and variants of the \frechet\ distance have been studied since then (see, e.g.~\cite{BBMS19,DH13,EFV07,MSSZ11}).
The variant most relevant to our work is the \frechet\ distance between two point sets, recently studied by Buchin and Kilgus~\cite{BK22}. 
In this variant, the input objects are two sets of points rather than two curves, and the goal is to construct two curves (one for each set of points), to minimize the \frechet\ distance between them. The main difference from our version is that we also need to find a partition of the point set into two sets. 
Buchin and Kilgus showed that under the discrete \frechet\ distance, the problem is equivalent to computing the Hausdorff distance between the two sets, and thus can be solved in $O(nm\log(nm))$ time (where $n,m$ are the sizes of the two sets).
The continuous version, on the other hand, was shown to be NP-complete.
They also provide an exponential time algorithm running in $O(k^a((m+n-a)+a\log a))$ where $a$ denotes the number of points that can be matched only to edges (``floating'' points in their terminology) and $k$ is the maximum number of edges that can cover such a point.

Another closely related variant is the Curve-Point-Set Matching Problem (CPSM), where given a polygonal curve $P$, a set of points $S$, and a maximum distance $\delta > 0$, the objective is to find another polygonal curve $Q$, whose set of vertices is either a subset of $S$ or contains all the points of $S$ (also, either with or without repetitions, referred to as the non-unique/unique variant, respectively), such that the continuous or discrete \frechet\ distance between the new curve $Q$ and the original curve $P$ is smaller than $\delta$. Maheshwari, Arora, and Smid~\cite{MAS11} addressed the continuous non-unique subset version of the problem, and presented an algorithm that runs in $O(nk^2)$ time. Wylie~\cite{WT13} studied CPSM under the discrete \frechet\ distance, and proved that the unique version (where each point in $S$ can appear only once in $Q$) is NP-complete for both the subset and all-point versions. This is in contrast to the non-unique cases which were shown to be polynomial-time solvable (in $O(nk)$ time, where $n$ is the size of $S$ and $k$ the size of $P$). Accisano and Ungor \cite{AU12} also showed NP-completeness under the continuous distance for the all-points variant, both in unique and non-unique settings.

\oldparagraph{Overview.} 
To the best of our knowledge, the \frechet-TSP problem described above has not been studied before. This gives rise to a rich family of TSP-style questions centered around the \frechet\ distance, and we systematically outline several natural variants that capture different aspects of this setting.
In \Cref{sec:discrete-frechet-tsp}, we present an algorithm for the discrete \frechet-TSP problem that runs in $O(n\log n)$ time for a set of $n$ points in a constant dimension. We utilize a combination of the properties of Nearest-Neighbor-Graph and Minimum-Spanning Tree (MST) over a Unit-Disk Graph (UDG).
By utilizing the properties of UDG, we ensure an upper bound on the distance between the partitions we output. By utilizing the properties of the MST, we obtain a linear time partitioning. The initialization of these constructs are obtained in $O(n \log n)$ time.
In \Cref{sec:min-length}, we consider a variant of the problem aiming to minimize the total length of the curves, and present a constant-factor approximation algorithm that also runs in $O(n\log n)$ time.
In \Cref{sec:balance}, we focus on balancing the number of vertices among the partitioned components, and present different strategies to achieve almost optimal balance. Here, we utilize the properties of the UDG again. Specifically, the upper bound on the number of kissing number of the graph.
This in turn limits the number of cases in the possible partition we wish to balance. We handle these case-by-case showing that we can get the balance as close as at most 1 from optimal.
Finally, in \Cref{sec:continuous}, we show that the continuous variant of \frechet-TSP is NP-hard. Out proof is an adaptation of the method used by Buchin and Kilgus \cite{BK22}.

\section{Notations and problem definition}\label{sec:prelims}

A polygonal curve $P$ in $\reals^d$ is a continuous function $P:[1,n]\rightarrow\reals^d$, such that for any integer $1\le i\le n-1$ the restriction of $P$ to the interval $[i,i+1]$ is a straight line segment. The points $P[0],\dots,P[n]$ are the vertices of $P$, and the segments $\overline{P[i]P[i+1]}$ are the edges of $P$. 

Let $P:[0,n]\rightarrow\reals^d$ and $Q:[0,m]\rightarrow\reals^d$ be two polygonal curves with number of vertices $n$ and $m$, respectively.

\oldparagraph{The (continuous) \frechet\ distance.} A reparameterization of a curve $P$ is a continuous, non-decreasing, surjective function $f:[0,1]\rightarrow[1,n]$ such that $f(0)=1$ and $f(1)=n$. The \frechet\ distance between $P$ and $Q$ is defined as
$\df(P,Q) =  \inf_{f,g}{\max_{t\in[0,1]}\|P(f(t))-Q(g(t)}\|,$
where $f$ is a reparameterization of $P$ and $g$ is a reparameterization of $Q$.

\oldparagraph{The discrete \frechet\ distance.}\footnote{For the simplicity of presentation, we follow the definition given in \cite{BJWYZ08}, which is equivalent to the original definition given in \cite{EM94}.}
A \dfn{paired-walk} along $P$ and $Q$ is a sequence of pairs $\pi=\{(P_{i},Q_{i})\}_{i=1}^{k}$, such that $P_{1},...,P_{k}$ and $Q_{1},...,Q_{k}$ partition $P$ and $Q$, respectively, into (disjoint) non-empty subsequences of their vertices, and for any $i$ it holds that either $|P_{i}|=1$ or $|Q_{i}|=1$.
The \dfn{cost} of a paired walk $W$ along $P$ and $Q$ is 
$$\text{cost}(\pi)=\underset{i}{\max}\underset{(p,q)\in P_{i}\times Q_{i}}{\max}\|p - q\|.$$
For any pair $(p,q)\in P_{i}\times Q_{i}$, we say that $p$ is \dfn{matched} to $q$ in $\pi$.

The \dfn{discrete \frechet\ distance} between $P$ and $Q$ is $\dfd(P,Q)=\underset{\pi\in\Pi}{\min}\ \text{cost}(\pi)$,
where $\Pi$ is the set of all possible paired-walks along $P$ and $Q$.

\oldparagraph{Partitioning a point set into curves.}
Let $\delta$ be a \frechet-based distance measure for two curves in $\reals^d$ (i.e., either the continuous or discrete \frechet\ distance). Given a set $S$ of points in $\reals^d$, we say that two curves $P,Q$ \dfn{partition} $S$ if there is a partition of $S$ into two sets $A$ and $B$ such that the set of vertices of $P$ is exactly $A$ and the set of vertices of $Q$ is exactly $B$, and each point in $S$ is used exactly once (either in $P$ or in $Q$). 
The basic version of the problem that we wish to consider is the following.
\begin{problem}[\frechet-TSP]\label{prb:F-TSP}
    Given a set $S$ of points in $\reals^d$, find two curves $P,Q$ that partition $S$ and such that $\delta(P,Q)$ is minimized.
\end{problem}

Note that if we do not require each point in $S$ to be visited by exactly one of the agents, then the problem becomes trivial: we can pick any order on the points in $S$ and set $P=Q$, so the distance between the paths is $0$. 
In \Cref{sec:multiset}, we discuss a variant in which we allow a point to be used more than once by the same agent, and show that this variant is equivalent to \Cref{prb:F-TSP} under the discrete \frechet\ distance.

Denote by $\eps^*$ the distance between the curves in an optimal solution for the \frechet-TSP problem. Clearly, there may be many different optimal solutions, all having distance $\eps^*$. However, some solutions may be considered better than others, for example, if each curve covers roughly the same number of points from $S$ (balanced partition - see \Cref{fig:optimal-frechet-unbalanced}), or if the min-max length (or sum of lengths) of the curves is very small in comparison to other solutions (see \Cref{fig:optimal-frechet-with-long-curves}). We therefore define below different variants of the problem in which the goal is to find a ``good'' solution among those that achieve the optimal distance $\eps^*$.

\begin{figure}[H]
    \centering
    \includegraphics[scale=0.4]{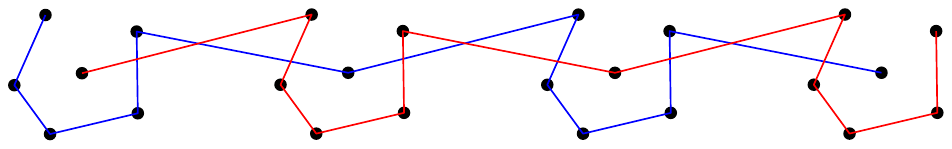}
    \hspace{1cm}
    \includegraphics[scale=0.4]{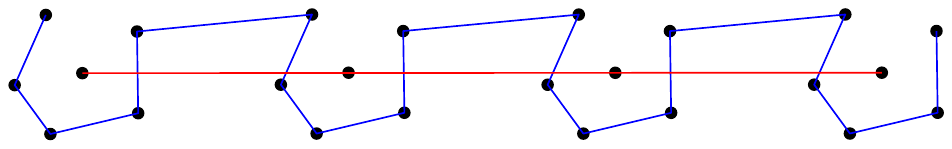}
    \caption{Left: a balanced partition, right: an imbalanced partition. Both have the same distance.}
    \label{fig:optimal-frechet-unbalanced}
\end{figure}

\begin{figure}[H]
    \centering
    \includegraphics[scale=0.3]{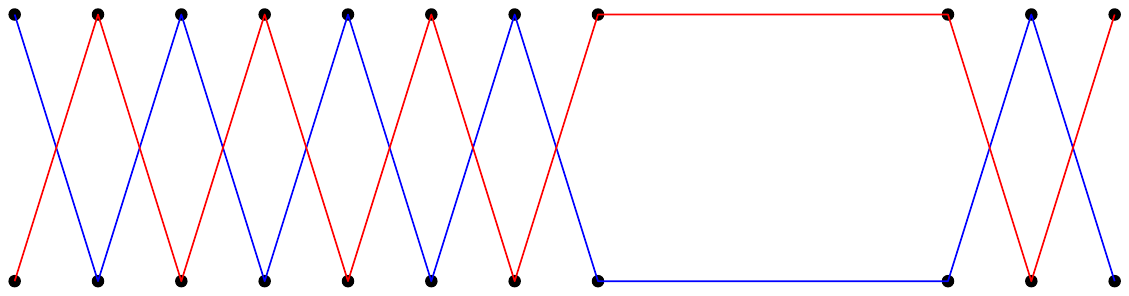}
    \hspace{1cm}
    \includegraphics[scale=0.3]{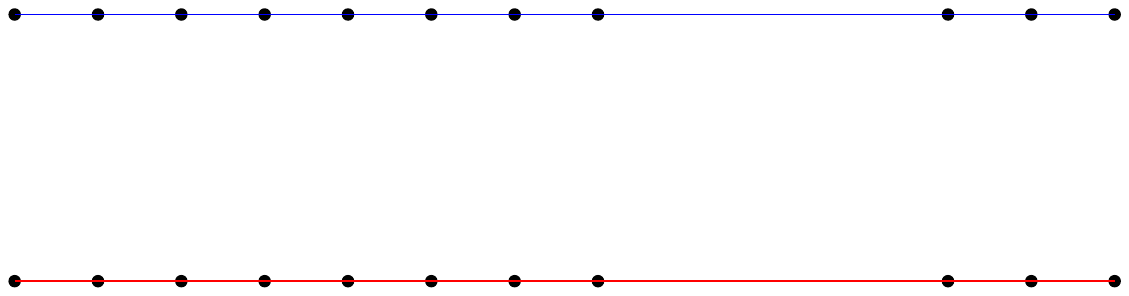}
    \caption{Left: two long curves, right: two shorter curves. Both have the same distance.}
    \label{fig:optimal-frechet-with-long-curves}
\end{figure}

\begin{problem}[balanced-\frechet-TSP]\label{prb:balanced-F-TSP}
    Given a set $S$ of $n$ points in $\reals^d$, find two curves $P,Q$ that partition $S$, such that $\delta(P,Q)=\eps^*$, and $\max\{|P|,|Q|\}$ is minimized.
\end{problem}

For a curve $P$, denote by $\ell(P)$ the sum of the lengths of the edges of $P$. 
\begin{problem}[min-max-\frechet-TSP]\label{prb:min-max-F-TSP}
    Given a set $S$ of points in $\reals^d$, find two curves $P,Q$ that partition $S$, such that $\delta(P,Q)=\eps^*$, and $\max\{\ell(P),\ell(Q)\}$ is minimized.
\end{problem}
\begin{problem}[min-sum-\frechet-TSP]\label{prb:min-sum-F-TSP}
    Given a set $S$ of points in $\reals^d$, find two curves $P,Q$ that partition $S$, such that $\delta(P,Q)=\eps^*$, and $\ell(P)+\ell(Q)$ is minimized.
\end{problem}
Minimizing the length of the path is NP-hard, similar to the traveling salesman problem (TSP), which is NP-hard: for a reduction, simply double each point in an instance of TSP, leading to the following theorem: 

\begin{corollary}
    min-max-\frechet-TSP and min-sum-\frechet-TSP are NP-hard.
\end{corollary}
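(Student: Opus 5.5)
The plan is a reduction from the Euclidean (path) TSP, which is NP-hard even in the plane (Papadimitriou; the path version is equally hard). Given a TSP instance, i.e.\ a set $T$ of $m$ points in $\reals^d$ and a threshold $L$, we build an instance $S$ of \frechet-TSP by \emph{doubling} every point. For each $t\in T$ we put two sites $t',t''$ into $S$, both located at $t$, so $|S|=2m$. We treat $S$ as a set of $2m$ distinct sites, some of which share a location. If coincident sites are not allowed, we instead perturb each copy by a tiny amount $\eta$ and carry the resulting $O(m\eta)$ error through the length comparison; I treat the coincident version first.

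First, $\eps^*=0$ for this instance. Take any Hamiltonian path $\tau$ on $T$. Let $P$ visit the copies $t'$ in the order of $\tau$, and let $Q$ visit the copies $t''$ in the same order. Then $P$ and $Q$ are the same geometric curve, so both the discrete and the continuous \frechet\ distance between them are $0$. Hence $\delta(P,Q)=\eps^*$ is required in Problems~\ref{prb:min-max-F-TSP} and~\ref{prb:min-sum-F-TSP}, and every feasible solution has distance $0$.

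The key step, and the main obstacle, is the converse: every pair $P,Q$ that partitions $S$ with $\delta(P,Q)=0$ comes from a single Hamiltonian path of $T$ traversed by both agents.

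\emph{Discrete case.} Every vertex of $P$ must be matched to a vertex of $Q$ at the same location. Each location carries exactly two sites. So if $P$ contained both copies of some $t$, then $Q$ would contain no site at $t$, which is a contradiction. Thus each of $P$ and $Q$ contains exactly one copy of every $t\in T$. A paired walk of cost $0$ then forces the sequences of locations along $P$ and $Q$ to coincide up to consecutive repetitions. No location repeats within a curve, so the sequences are identical, and $P$, $Q$ are the same Hamiltonian path $\tau$ on $T$.

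\emph{Continuous case.} Distance $0$ means $P\circ f=Q\circ g$ for some reparameterizations. Hence $P$ and $Q$ trace the same point set in the same order, and $\ell(P)=\ell(Q)$. The same counting argument, applied to vertices lying on the common trace, shows that each curve visits every location of $T$. So each curve, read in order of first visits, gives a Hamiltonian path of $T$ of length at most $\ell(P)$.

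With this correspondence, the optimal min-sum value equals twice the optimal TSP-path length of $T$, and the optimal min-max value equals that length. Hence deciding whether the optimum of Problem~\ref{prb:min-sum-F-TSP} is at most $2L$, or whether the optimum of Problem~\ref{prb:min-max-F-TSP} is at most $L$, decides the TSP instance. The reduction is clearly polynomial.

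For the perturbed version, we choose $\eta$ much smaller than the minimum gap between distinct candidate tour lengths; for integer or rational inputs this is polynomially encodable. Then $\eps^*$ is at most about $\eta$, and a solution of near-zero distance must again pair the copies as above up to $O(m\eta)$ in length, so the same threshold argument goes through.
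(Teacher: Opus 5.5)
Your main argument, with two coincident copies of each point, is correct and is exactly the paper's reduction. The paper's entire justification is ``double each point in an instance of TSP.'' You supply the details it leaves out: that $\eps^*=0$, that a distance-$0$ pair forces both curves to follow one common Hamiltonian order of $T$, and the first-visit lower bound in the continuous case. Like the paper, this tacitly treats $S$ as a multiset.

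Your perturbation fallback, however, does not work as written.

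\emph{The gap claim is unjustified.} You say $\eta$ can be chosen below the minimum gap between distinct candidate tour lengths using polynomially many bits. For integer inputs these lengths are sums of square roots, and no inverse-polynomial lower bound on the difference of two distinct such sums is known. This is the sum-of-square-roots problem, the same obstacle that keeps Euclidean TSP from being known to lie in NP.

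\emph{$\eps^*$ is left vague.} You write that $\eps^*$ is ``at most about $\eta$,'' but feasibility requires $\delta(P,Q)=\eps^*$ exactly. So you need a short witness that attains $\eps^*$. Also, in the continuous case a pair at distance about $\eta$ need not split every twin pair: one curve can contain both copies while the other passes between them.

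\emph{How to repair it.} Shift every second copy by the same vector, $t''=t+\eta e$, with $\eta<\tfrac12\min_{s\ne t}\|s-t\|$.
\begin{itemize}
\item Then $\eps^*=\eta$ for both distances. The curves $P$ and $Q$ start at distinct sites, which are at distance at least $\eta$, and the pair $P=\tau$, $Q=\tau+\eta e$ attains $\eta$.
\item In the discrete case, a walk of cost $\eta$ can match only twins, so both curves follow a common order $\tau$. The inequality $\|x+y\|+\|x-y\|\ge 2\|x\|$ then gives $\ell(P)+\ell(Q)\ge 2\ell(\tau)$ exactly, so no gap argument is needed.
\item In the continuous case, each curve passes within $2\eta$ of every location, which only bounds lengths up to $O(m\eta)$. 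Reduce instead from Hamiltonian path in grid graphs, where the optimal path length is either $m-1$ or at least $m-2+\sqrt2$. Then $\eta<(\sqrt2-1)/(4m)$ separates the two cases.
\end{itemize}
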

We can thus aim to find an approximation algorithm for minimizing the length.

\section{Discrete \frechet-TSP}\label{sec:discrete-frechet-tsp}
In this section, we focus on the discrete \frechet\ distance (i.e. $\delta = \dfd$).
We begin by presenting an algorithm for the decision version of the problem: Given a set $S$ of $n$ points in $\reals^d$, and a threshold $\eps \ge 0$, decide whether there exist two curves $P$ and $Q$ that partition $S$ and have $\dfd(P,Q) \leq \eps$.

Let $\Geps = (S,E)$ be the graph whose vertices are the points in $S$, and there is an edge $\{v,u\} \in E$ if and only if $\|v - u\| \leq \eps$. By definition, $\Geps$ is a unit-disk graph with radius $\eps$.
Let $P,Q$ be two curves that partition $S$ s.t. the edges in $P \cup Q$ are edges from $\Geps$. Any paired walk $\pi$ along $P$ and $Q$ with $\text{cost}(\pi)\le\eps$ can be reduced to a set of disjoint stars in $\Geps$ simply by removing edges from $\pi$.
Therefore, if there is such a walk $\pi$, then there is such a set of disjoint starts and vice versa. Thus, we can search for such a set.

\begin{lemma} \label{lem:g_eps_no_degree_0_exists_division}
    There exist two curves $P,Q$ that partition $S$ and have $\dfd(P,Q) \le \eps$ if and only if $\Geps$ does not contain a vertex of degree $0$.
\end{lemma}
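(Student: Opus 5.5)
We prove the two directions separately. The "only if" direction is immediate from the definition of a paired walk. Suppose $P,Q$ partition $S$ and $\pi$ is a paired walk with $\text{cost}(\pi)\le\eps$. Every vertex $v\in S$ lies in some $P_i$ or $Q_i$, say $v\in P_i$. Since $Q_i$ is non-empty, $v$ is matched to some $q\in Q_i$ with $\|v-q\|\le\eps$. Because $P$ and $Q$ use disjoint point sets, $q\neq v$, so $\{v,q\}$ is an edge of $\Geps$ and $v$ has positive degree. One subtlety: if $S$ contains coincident points, we treat them as distinct vertices of $\Geps$ at distance $0$. Then matching a copy to another copy is still an edge.

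For the "if" direction, assume no vertex of $\Geps$ is isolated. We first decompose $S$ into vertex-disjoint stars, each with at least two vertices, using only edges of $\Geps$. This is the standard fact that a graph without isolated vertices has a spanning star forest. To prove it, take a spanning forest, i.e., a spanning tree $T$ of each connected component; each such $T$ has at least two vertices. We then split each tree into stars by induction on $|T|$. If $T$ is itself a star, we are done. Otherwise, take a longest path $v_0v_1\cdots v_k$ in $T$ with $k\ge3$. All neighbours of $v_1$ other than $v_2$ are leaves. We remove $v_1$ together with its leaf neighbours, which form a star of at least two vertices. The remainder is a tree containing $v_2,v_3$, so it has at least two vertices, and we recurse on it. Alternatively, the paper's later idea of combining a nearest-neighbour graph with an MST could be used, but the tree-peeling argument is self-contained.

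Next, we turn the stars $\sigma_1,\dots,\sigma_k$, listed in an arbitrary order, into two curves. For each star $\sigma_j$ with centre $c_j$ and leaves $L_j$ (if $|\sigma_j|=2$, pick either vertex as the centre), we assign $c_j$ to one curve and the whole of $L_j$ to the other. Concatenating these pieces in the order $j=1,\dots,k$ gives $P$ and $Q$; since the stars partition $S$, the curves partition $S$. To make each curve well defined, fix the choice for every $j$ by putting $c_j$ in $P$ and $L_j$ in $Q$; no alternation is needed. The paired walk then consists of the pairs $(\{c_j\},L_j)$ for $j=1,\dots,k$, in order. These pieces are consecutive, disjoint and non-empty, and each pair has a singleton side. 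Every matched pair is a centre and one of its leaves, at distance at most $\eps$, so $\text{cost}(\pi)\le\eps$ and hence $\dfd(P,Q)\le\eps$.

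The only real work is the star decomposition, specifically making sure no isolated vertex remains after a star is peeled off. The longest-path choice handles this, because the remaining tree still contains the edge $v_2v_3$.
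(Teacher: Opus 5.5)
Your proof is correct and follows essentially the same route as the paper. The ``only if'' direction is the same observation that every point must be matched to a different point within distance $\eps$. The ``if'' direction likewise splits a spanning forest of $\Geps$ into stars by peeling from the leaves, then puts centres in $P$ and leaves in $Q$ in star order to get the paired walk. The one difference is the peeling rule: you remove a single star at the end of a longest path, which keeps at least two vertices in the remaining tree. The paper instead removes all leaf-parent stars at once and then attaches a possible leftover single vertex to a neighbouring star, so your version is slightly cleaner.
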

\begin{proof}
    If $\Geps$ contains a vertex $v$ of degree $0$, then there is no $u\in S$ with $\|v-u\| \le \eps$. Therefore, $v$ cannot be matched to any other point in $S$ in a paired-walk of cost at most $\eps$.    
    For the other direction, let $C$ be a connected component of $\Geps$. We now show that if $C$ contains more than one vertex, then we can construct two curves on the set of vertices of $C$ as required. This finishes the proof, as we can concatenate the curves that were constructed for all the connected components, and get two curves with vertices from $S$ such that each point in $S$ is used exactly once.    
    Let $T_C$ be some spanning tree of $C$. We color the vertices of $T_C$ red and blue, as follows: first, color all the leaves in red, and then, color the parents of all those leaves in blue. This coloring defines a set of stars, each has a blue center node and at least one red node. By removing these stars from $T_C=T_0$, we are left with a smaller tree, $T_1$. If $T_1$ is a single vertex, then color it red and add it to one of the stars of its child nodes. Otherwise, $T_1$ contains at least one edge. We then repeat the process on $T_1$ and obtain another set of stars, remove them from $T_1$ and get a smaller tree $T_2$. We continue this process until all the vertices are colored. This process results in a partition of the nodes of $T_C$ into stars $S_1,\dots,S_k$, each star contains exactly one blue node $\textbf{blue}(S_i)$, and a non empty sequence of red nodes $\textbf{red}(S_i)$. Let $P=\{\textbf{blue}(S_1),\dots,\textbf{blue}(S_k)\}$ and $Q=\{\textbf{red}(S_1),\dots,\textbf{red}(S_k)\}$. Since for each $i\in[k]$ we have $\|\textbf{blue}(S_i)-v\|\le \eps$ for every $v\in\textbf{red}(S_i)$, the sequence of stars corresponds to a paired-walk with cost at most $\eps$, and therefore we get $\dfd(P,Q)\le \eps$.
\end{proof}

Note that \Cref{lem:g_eps_no_degree_0_exists_division} provides an algorithm for computing a partition of $S$ into two curves $P,Q$ such that $\dfd(P,Q) \le \eps$. Given a spanning forest of $\Geps$, the running time for constructing $P, Q$ is $O(n)$, since each tree can be colored using a a simple BFS traversal. We conclude this in the following corollary.

\begin{corollary}\label{cor:decision-alg}
    Given a spanning forest $T$ of $\Geps$ such that no vertex in $T$ has degree $0$, a partition $P,Q$ of $S$ with $\dfd(P,Q)\le \eps$ can be constructed in $O(n)$ time.
\end{corollary}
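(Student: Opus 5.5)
The plan is to make the constructive direction of \Cref{lem:g_eps_no_degree_0_exists_division} explicit and to check that each step takes linear time. The input is a spanning forest $T$ of $\Geps$ with no isolated vertex. In $O(n)$ time I would first compute its trees $T_1,\dots,T_m$ with a single traversal. Each tree has at least two vertices, and $|E(T)|\le n-1$.

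For each tree $T_j$, I would not peel off leaves round by round, since a naive implementation of that could be quadratic. Instead I would root $T_j$ at an arbitrary vertex and run one BFS to get parent pointers and BFS order. Then I process the vertices in reverse BFS order, so children come before parents, and decide greedily. If a vertex $v$ has an uncovered child, $v$ becomes a blue center and all of its currently uncovered children become red leaves of its star. Otherwise $v$ stays uncovered for now, and its parent deals with it.

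At the end only the root $r$ can still be uncovered. This happens exactly when every child of $r$ is already covered. In that case $r$ has some child $c$, because $|T_j|\ge2$, and $c$ lies in some star. If $c$ is the blue center of its star, I add $r$ as a red leaf of that star. If $c$ is red, its center $b$ is a child of $c$, and $c$ has no uncovered child other than those in its star. Here I either turn that star into the star centered at $c$ (with $b$ as a leaf and $r$ added), or attach $r$ to $c$ while splitting the star. Working out this case cleanly is the main point to check. I expect the obstacle to be this fix-up at the root, mirroring the ``if $T_1$ is a single vertex'' case in the lemma's proof. Since every star has $\ge2$ vertices and is formed from tree edges, swapping center roles inside a star of size two always works. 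For larger stars, I would restructure so that $c$ becomes the center with $r$ and $b$ as leaves and the other leaves of $b$ remain leaves of $b$, which is valid only if $b$ keeps at least one leaf. If it doesn't, $b$ simply becomes a leaf of $c$. Every such local fix is $O(1)$.

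Every star edge is an edge of $T\subseteq\Geps$, so all distances are at most $\eps$. I then list the stars $S_1,\dots,S_k$ in any order and output $P=\textbf{blue}(S_1),\dots,\textbf{blue}(S_k)$ and $Q=\textbf{red}(S_1),\dots,\textbf{red}(S_k)$, each red set in arbitrary internal order. Exactly as in the proof of \Cref{lem:g_eps_no_degree_0_exists_division}, the pairs $(\{\textbf{blue}(S_i)\},\textbf{red}(S_i))$ form a paired walk of cost $\le\eps$, so $\dfd(P,Q)\le\eps$. The component decomposition, the BFS, the greedy pass and the output are each linear in $n+|E(T)|=O(n)$, which gives the $O(n)$ bound.
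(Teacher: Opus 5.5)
Your proof is correct and takes a somewhat different route from the paper. The paper reuses the leaf-peeling construction from the proof of \Cref{lem:g_eps_no_degree_0_exists_division}: make all current leaves red and their neighbours blue centers, strip these stars, repeat, and attach a leftover single vertex to an adjacent star. It then asserts in one line that each tree can be colored by a simple BFS in $O(n)$ time, without saying how the rounds are implemented. You replace this unrooted, round-by-round peeling with one rooted bottom-up greedy pass in reverse BFS order. That makes the linear bound immediate and yields a different but equally valid star cover. Your root fix-up plays the role of the paper's ``single leftover vertex'' case.

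That fix-up is simpler than you think. In your greedy, stars are only ever created as ``$v$ together with its currently uncovered children'', so every red vertex is a child of its center. If the root $r$ is still uncovered, no child $c$ of $r$ can be red, since its center would have to be its parent $r$, which never became a center. Hence every child of $r$ is a blue center, and you simply add $r$ as a red leaf of that star. Your ``$c$ is red'' branch is therefore vacuous. Its description is also wrong: the center of a red $c$ is its parent, not a child $b$. And making $b$ a leaf of $c$ while $b$ keeps its own leaves does not give disjoint stars. Delete that branch and replace it with this one-line observation; the rest of the argument, including the paired walk and the $O(n)$ bound, is fine.
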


We wish to find the partition $P,Q$ of $S$ that minimizes $\dfd(P,Q)$. Denote by $\eps^*$ the distance between the curves in an optimal partition, i.e., there exists a partition $P^*, Q^*$ of $S$ with $\dfd(P^*,Q^*) = \eps^*$, and for any partition $P,Q$ of $S$ it holds that $\dfd(P,Q)\ge\eps^*$. 
In \Cref{clm:longest-edge} below, we show that $\eps^*$ is the furthest nearest neighbor distance, i.e., the length of the longest edge in the Nearest Neighbor Graph of $S$.

Denote by $\NNG(S)$ the Nearest Neighbor Graph (NNG) of $S$, i.e., the graph whose vertices are the points of $S$, and there is an edge $\{u,v\}$ is the graph if and only if $v$ is a nearest neighbor of $u$ in $S$. Note that a point $u$ can have more than a single nearest neighbor. In this case, we break ties by taking the point with the largest index to be the unique nearest neighbor. It is well-known that when applying such a tie-breaking rule, the NNG is a forest, and a subgraph of the Euclidean minimum spanning tree.

\begin{claim}\label{clm:longest-edge}
    Let $L$ be the longest edge in $\NNG(S)$. Then $\eps^* = L$.
\end{claim}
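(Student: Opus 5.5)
We prove the equality $\eps^* = L$ by showing $\eps^* \le L$ and $\eps^* \ge L$ separately, in both cases through \Cref{lem:g_eps_no_degree_0_exists_division}. We assume $|S|\ge 2$, since otherwise no partition into two nonempty curves exists and $\NNG(S)$ has no edges. Throughout, let $\eta(u)$ denote the distance from $u\in S$ to its nearest neighbor in $S\setminus\{u\}$. By definition, every edge of $\NNG(S)$ has the form $\{u,v\}$ with $v$ a nearest neighbor of $u$, so its length is $\eta(u)$. Conversely, every $u$ contributes the edge to its (tie-broken) nearest neighbor. Hence
\[
L=\max_{u\in S}\eta(u).
\]

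\textbf{Upper bound.} Set $\eps=L$. For every $u\in S$ we have $\eta(u)\le L$, so $u$ is adjacent in $G_L$ to its nearest neighbor. Thus $G_L$ has no vertex of degree $0$. By \Cref{lem:g_eps_no_degree_0_exists_division}, there exist curves $P,Q$ partitioning $S$ with $\dfd(P,Q)\le L$, and therefore $\eps^*\le L$.

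\textbf{Lower bound.} Take an optimal partition $P^*,Q^*$ with $\dfd(P^*,Q^*)=\eps^*$. Applying the ``only if'' direction of \Cref{lem:g_eps_no_degree_0_exists_division} with $\eps=\eps^*$, the graph $\Gepso$ has no vertex of degree $0$. Let $u$ be the endpoint of the longest $\NNG$ edge, so that $\eta(u)=L$. Since $u$ has some neighbor $w\ne u$ in $\Gepso$, we get $L=\eta(u)\le\|u-w\|\le\eps^*$.

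Combining the two bounds gives $\eps^*=L$. The only delicate point is the lower bound: the matched partner of $u$ must be a distinct point of $S$, which holds because $u$ appears only once in $P^*\cup Q^*$ and must be matched to at least one vertex of the other curve. This is exactly what the lemma's degree-$0$ argument already handles. The tie-breaking rule plays no role here, since all nearest neighbors of $u$ lie at the same distance $\eta(u)$.
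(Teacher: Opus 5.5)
Your proof is correct and follows essentially the same route as the paper. The upper bound applies \Cref{lem:g_eps_no_degree_0_exists_division} (the paper uses \Cref{cor:decision-alg}) to $G_L$, and the lower bound rests on the same fact, namely that every point must be matched to a distinct point within distance $\eps^*$. The paper phrases the lower bound as a contradiction on a paired-walk and uses both endpoints of the longest $\NNG$ edge, whereas your identity $L=\max_{u}\eta(u)$ lets you argue directly from a single endpoint, which is slightly cleaner.
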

\begin{proof}
    Notice that $\NNG(S)$ is a spanning forest of $G_L$ that does not contain any vertex of degree $0$ (every point in $S$ has a unique nearest neighbor). Therefore, by \Cref{cor:decision-alg} we get that there exists a partition $P,Q$ of $S$ with $\dfd(P,Q)\le L$.
    
    Assume by contradiction that there is a partition $P',Q'$ of $S$ with $\dfd(P',Q')=\eps^* < L$, and consider a paired-walk $\pi$ along $P$ and $Q$ with cost $\eps^*$. Then for any pair of points $v,u \in S$ that are matched in $\pi$, we have $\|v-u\| < L$.
    Let $\{w, x\}$ be the longest edge in $\NNG(s)$, so $\|w-x\| = L$, and $w,x$ are not matched in $\pi$. Therefore, there exists $w',x'\in S$ such that $w,w'$ are matched in $\pi$ and $x,x'$ are matched in $\pi$. We get that $\|w-w'\| < L$, so $x$ is not a nearest neighbor of $w$, and $\|x-x'\| < L$, so $w$ is not a nearest neighbor of $x$, a contradiction to $\{w, x\}$ being an edge of $\NNG(s)$.
\end{proof}

\vspace{-2pt}
By \Cref{clm:longest-edge}, all the edges of $\NNG(S)$ have length smaller or equal to the optimal distance $\eps^*$, and therefore it is a spanning forest of $G_{\eps^*}$. Thus, given $\NNG(S)$ as an input, the algorithm from \Cref{cor:decision-alg} runs in $O(n)$ time. For $d=2$, computing $\text{NNG}(S)$ can be dome in $O(n \log n)$ time \cite{ep97}. For general dimension $d$, computing $\text{NNG}(S)$ can be done in $O(2^{O(d)} n \log n)$ time \cite{vaidya89}, so the overall running time for our problem is $O(n \log n)$ for any fixed dimension $d$.

We, therefore, obtain the following theorem.
\begin{theorem}\label{thm:discrete-frechet-TSP}
    Given a set $S$ of $n$ points in $\reals^d$, for any fixed dimension $d$, one can find two curves $P,Q$ that partition $S$ such that $\dfd(P,Q)$ is minimized in $O(n\log n)$ time.
\end{theorem}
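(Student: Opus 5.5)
The plan is to assemble the theorem from \Cref{clm:longest-edge} and \Cref{cor:decision-alg}, with the nearest neighbor graph serving both as a certificate of the optimal value $\eps^*$ and as the spanning forest fed to the linear-time construction. Throughout we assume $|S|\ge 2$, since otherwise no partition into two nonempty curves exists.

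The algorithm has three steps.

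\begin{enumerate}
\item \textbf{Compute $\NNG(S)$.} Use the all-nearest-neighbors algorithm of Vaidya~\cite{vaidya89}, which runs in $O(2^{O(d)}n\log n)$ time; for $d=2$ one may use~\cite{ep97}. Apply the tie-breaking rule (largest index) so that every point gets a unique nearest neighbor and $\NNG(S)$ is a forest. Each point is incident to the edge to its nearest neighbor, so no vertex has degree $0$. The forest has at most $n-1$ edges, so it is stored in $O(n)$ space.

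\item \textbf{Read off $L$.} Let $L$ be the length of the longest edge of $\NNG(S)$; a single scan over the edges finds it in $O(n)$ time. Every edge of $\NNG(S)$ has length at most $L$, so $\NNG(S)$ is a subgraph of $G_L$ on the full vertex set $S$. Hence it is a spanning forest of $G_L$ with no isolated vertex.

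\item \textbf{Build the curves.} Apply \Cref{cor:decision-alg} with $\eps=L$ and $T=\NNG(S)$. This gives curves $P,Q$ partitioning $S$ with $\dfd(P,Q)\le L$, in $O(n)$ time, by the star decomposition of each tree computed via BFS.
\end{enumerate}

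For optimality, \Cref{clm:longest-edge} states $L=\eps^*$, so no partition achieves distance below $L$. Therefore $\dfd(P,Q)=\eps^*$, i.e., the distance is minimized.

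The one point needing care is that \Cref{cor:decision-alg} asks for a spanning forest of $\Geps$ itself, not of the possibly larger optimal graph. This holds because $\NNG(S)\subseteq G_L$ edge-wise and covers all of $S$; each component of $\NNG(S)$ lies inside a component of $G_L$, which is all the construction in \Cref{lem:g_eps_no_degree_0_exists_division} uses. Each tree of $\NNG(S)$ has at least two vertices, so the star decomposition applies to each tree separately. Concatenating the per-tree curves gives the global $P,Q$.

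The total running time is $O(n\log n)$ for computing $\NNG(S)$ plus $O(n)$ for the remaining steps, which is $O(n\log n)$ for fixed $d$. The main obstacle is the near-linear computation of all nearest neighbors, which we take from the literature rather than prove.
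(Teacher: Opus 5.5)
Your proposal is correct and follows essentially the same route as the paper: compute $\NNG(S)$ in $O(n\log n)$ time (Vaidya for fixed $d$, Eppstein for the plane), use \Cref{clm:longest-edge} to see that it is a forest inside $G_{\eps^*}$ with no isolated vertex, and run the $O(n)$ construction of \Cref{cor:decision-alg}. Your remark that ``spanning forest'' is only needed in the weak sense is the reading the paper uses implicitly. That weak sense is a forest on all of $S$ contained in $G_L$, with every tree having at least two vertices.
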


In the two remarks below, we suggest other ways to use \Cref{clm:longest-edge} and \Cref{cor:decision-alg} for computing an optimal partition of $S$.

\begin{remark}[Using the Net and Prune framework] Har-Peled and Raichel~\cite{HR15} show that the furthest nearest neighbor distance (which, by \Cref{clm:longest-edge}, equals $\eps^*$) can be computed in expected linear time for points in any dimension $d$. Then, to apply \Cref{cor:decision-alg}, we need to compute a spanning forest $T$ of $\Geps$ such that no vertex in $T$ has degree $0$. For this we can use the Net and Prune framework of ~\cite{HR15} as follows. Put $S$ in a grid with cell diameter $\eps^*$. Points that belong to the same grid cell form a clique in $\Geps$, and can be connected, for example, by some star graph. The lonely points, i.e., points that are alone in their cell, can be connected to a point in a neighbor cell (such a point must exist by the way we chose $\eps^*$). Since the number of neighbor cells is $2^{O(d)}$, this results in an $O(2^d\cdot n)$ time algorithm for computing the spanning forest, and $O(2^{O(d)}\cdot n)$ expected time for computing the partition.
\end{remark}

\begin{remark}[Using the MST] 
For points in dimension $d >2$, we can use the the Minimum Spanning Tree of $S$ ($\MST(S)$) instead of $\NNG(S)$ as follows. We compute $\MST(S)$, and then iterate the edges from the longest to shortest. If removing an edge from $\MST(S)$ does not create a vertex of degree $0$, remove it, and otherwise stop. Since $\NNG(S)\subseteq \MST(S)$, and because in $\NNG(S)$ there is no vertex of degree $0$, the resulting graph contains $\NNG(S)$ and the length of its edges is at most $\eps^*$. We can then execute the algorithm from \Cref{cor:decision-alg} on this residue graph. Since $\MST(S)$ can be computed in $O(n^{2-2/{\lceil d/2 \rceil + 1}}+\eps)$ time~\cite{agarwal90}, this is also the total running time of the algorithm. Notice that for $d \le \log {\frac{n^k}{\log n}}$, where $0 < k < 1-1/{O(\log\frac{n}{\log n})}$, we prefer to run the algorithm that finds the residue graph over computing $\MST(S)$. This is because the running time for finding the residue graph is $O(n^{2-2/{\lceil k \cdot \log\frac{n}{\log n}/2 \rceil + 1}}+\eps)$ while computing $\NNG(S)$ is in $O(n^{1+k})$.
\end{remark}

\section{Minimizing the lengths}\label{sec:min-length}

In this section, we focus on problems \ref{prb:min-max-F-TSP} and \ref{prb:min-sum-F-TSP}. Let $S$ be a set of $n > 2$ points in the plane. Our goal is to find a partition $P, Q$ with $\dfd(P, Q) = \eps^*$. In problem \ref{prb:min-max-F-TSP}, the partition also minimizes $\max\{\ell(P),\ell(Q)\}$. In problem \ref{prb:min-sum-F-TSP} it also minimizes $\ell(P)+\ell(Q)$.

Let $\TSP(S)$ be a path on $S$ of minimum length. Denote by $\ell(G)$ the sum of edge lengths of a graph $G$ embedded in the plane.

\vspace{5pt}
\subsection{Min-max discrete \frechet-TSP}
In this section, we prove the following theorem.

\begin{theorem} \label{thm:min-max-length}
    Given a set $S$ of $n$ points in the plane, two curves $P,Q$ that partition $S$ such that $\dfd(P,Q) = \eps^*$ and $\max\{\ell(P), \ell(Q)\}\le 2.75\cdot \ell(\TSP(S))$ can be found in $O(n\log n)$ time.
\end{theorem}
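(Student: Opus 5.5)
The plan is to take the star decomposition behind \Cref{cor:decision-alg}, applied to $\NNG(S)$, and choose the \emph{order} of the stars (and of the red leaves inside each star) so that both curves are short. As shown after \Cref{clm:longest-edge}, every edge of $\NNG(S)$ has length at most $\eps^*$, and no vertex of $\NNG(S)$ is isolated. So the coloring procedure of \Cref{lem:g_eps_no_degree_0_exists_division} partitions $S$ into stars $S_1,\dots,S_k$. Each star has a blue center $\textbf{blue}(S_i)$ and red leaves within distance $\eps^*$ of it. Crucially, \emph{any} ordering of the stars gives a paired-walk of cost at most $\eps^*$, and so does any ordering of the red points inside each star. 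Hence $\dfd(P,Q)=\eps^*$ holds no matter which orders we pick.

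To fix the star order, I will build an approximate tour on the centers. First compute $\MST(S)$ in $O(n\log n)$ time, double it, and shortcut it to a Hamiltonian path $H$ on $S$ with $\ell(H)\le 2\ell(\MST(S))\le 2\ell(\TSP(S))$. Next, order the stars by the first time $H$ visits any of their points. Then let $P$ list the centers in this order and $Q$ list the red points star by star, in $H$-order inside each star. The time bound is $O(n\log n)$: NNG and MST cost $O(n\log n)$, and everything else is linear.

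The length bound works by charging each edge of $P$ or $Q$ to a portion of $H$ plus short detours. Consider consecutive stars $S_i,S_{i+1}$, let $a_i$ be the first point of $S_i$ on $H$, and let $a_{i+1}$ be the first point of $S_{i+1}$ on $H$.
\begin{itemize}
\item For $P$, the triangle inequality gives $\|\textbf{blue}(S_i)-\textbf{blue}(S_{i+1})\|\le \|\textbf{blue}(S_i)-a_i\| + \ell(H[a_i,a_{i+1}]) + \|a_{i+1}-\textbf{blue}(S_{i+1})\|$. The first and last terms are each at most $\eps^*$, and they vanish when the entry point is itself the center.
\item The segments $H[a_i,a_{i+1}]$ are disjoint, so they sum to at most $\ell(H)$.
\item The detour terms total at most $2k\eps^*$. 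Since every star has at least two points and each NNG edge has length at most $\eps^*$, I would try to bound $k\eps^*$ by a constant times $\ell(\TSP(S))$. Each star contains an NNG edge, and these edges lie in $\MST(S)$ and are disjoint. However, only the longest edge has length exactly $\eps^*$, so this does not directly give the bound. A more careful approach is to charge each detour $\|\textbf{blue}(S_i)-a_i\|$ to the NNG edge incident to $a_i$ inside its own star. Those edges are distinct tree edges of the NNG forest, which is a subgraph of $\MST(S)$, so the detours sum to at most $\ell(\NNG)\le \ell(\MST(S))\le \ell(\TSP(S))$. Counting both endpoints of each consecutive pair charges this twice, contributing roughly $\ell(\TSP)$ on top of $2\ell(\TSP)$ and giving a bound near $3$.
\end{itemize}
Getting down to $2.75$ is where I expect the main obstacle. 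I would instead run the Christofides-style or a $1.5$-approximate path to sharpen $\ell(H)$. Since that is slow, my fallback is a better charging argument: a red point within a star can be traversed using $H$ directly, and star-center detours are shared with the $Q$ side. Choosing which curve absorbs which detour balances the two curves at $2\ell(H)/\!\ldots$ plus $\tfrac34\ell(\MST)$.

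For $Q$, consecutive red points inside a star are at most $2\eps^*$ apart through the center. I would bound these jumps similarly by the star's NNG edges, which form a star of total length $\ell(S_i\cap \NNG)$. This contributes at most $2\ell(\NNG)$ over all stars, reduced by ordering so that each edge is charged once on average; inter-star jumps are handled as for $P$. Summing gives $\max\{\ell(P),\ell(Q)\}\le \ell(H)+c\,\ell(\MST(S))$, and the remaining work is tuning $c\le 0.75$ via these charging choices.
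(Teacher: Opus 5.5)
\textbf{There is a genuine gap: the proposal never establishes the $2.75$ bound.}

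Your construction is fine for the distance and the running time. The $\NNG(S)$ stars, taken in any star order and with any order of reds inside a star, give a paired walk of cost at most $\eps^*$, and everything is $O(n\log n)$. The gap is the length bound, which is the entire content of the theorem. You explicitly leave it open (``tuning $c\le 0.75$'').

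The part you do carry out also does not give ``near $3$''. Your triangle-inequality bound is $\ell(P)\le \ell(H)+2\sum_i\|\textbf{blue}(S_i)-a_i\|$. With $\sum_i\|\textbf{blue}(S_i)-a_i\|\le\ell(\NNG(S))\le\ell(\MST(S))$ and $\ell(H)\le 2\ell(\MST(S))$, this gives $4\,\ell(\MST(S))$, not $3$.

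For $Q$, ``inter-star jumps as for $P$'' fails. Stars are ordered only by the first visit of $H$, so the last red of $S_i$ may lie on $H$ after the first red of $S_{i+1}$. The corresponding pieces of $H$ are then neither correctly oriented nor disjoint. Routing through the centers instead costs $\ell(P)$ plus up to twice the total length of the star edges.

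Replacing $H$ by a $1.5$-approximate path does not help. The problem is the additive detour term, which is up to $2\ell(\MST(S))$ in your charging, so you would get $3.5$. That step is also not $O(n\log n)$.

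\textbf{Missing idea: couple the stars with the traversal.} This avoids paying detours on top of a full $2\ell(\MST(S))$ tour. The paper builds the stars from $\MST(S)$ edges during a DFS of $\MST(S)$ rooted at a leaf.
\begin{itemize}
\item If $v$ has lonely children $L(v)$ (close children with no close children), $v$ becomes a blue center with $L(v)$ as red leaves.
\item Otherwise $v$ becomes a red leaf of a close child $u$, together with $L(u)$.
\item The recursion runs on children of red vertices (last colored first), then on children of the blue one.
\end{itemize}
The coloring order is then a walk $T$ along $\MST(S)$ edges. Since $P$ and $Q$ are subsequences of $T$, $\max\{\ell(P),\ell(Q)\}\le\ell(T)$. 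The paper bounds $\ell(T)$ edge by edge:
\begin{itemize}
\item star edges and connecting edges are each traversed at most twice, so leaf edges and edges longer than $\eps^*$ are traversed at most twice in total;
\item an edge to a non-leaf lonely child that is traversed four times passes one charge to the far edge below that child, which is otherwise traversed only twice;
\item the last lonely child's edge is traversed only twice;
\item $|L(v)|\le 4$, because the Euclidean MST has maximum degree $5$ and the root is a leaf, so placing the farthest lonely child last averages the charge to $2.75$.
\end{itemize}
Your setup has NNG stars decoupled from an independent tour. It gives no handle for keeping the overhead below $0.75\,\ell(\MST(S))$ on each curve.
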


By \Cref{thm:discrete-frechet-TSP}, computing the value $\eps^*$ can be done in $O(n\log n)$ time.
Let $\MST(S)$ be a minimum spanning tree of $S$, and observe that $\ell(\MST(S)) \le \ell(\TSP(S))$. 
We thus focus on computing two curves $P,Q$ that partition $S$ such that $\dfd(P,Q) = \eps^*$ and $\max\{\ell(P), \ell(Q)\} \le 2.75 \cdot \ell(\MST(S))$.
For simplicity, we say that two points (or vertices) $u,v$ are \dfn{close} if $\|u-v\| \le \eps^*$, and otherwise we say that $u,v$ are \dfn{far}.
\begin{observation}\label{obs:close_neighbor}
    Every vertex in $\MST(S)$ has at least one neighbor in $\MST(S)$ that is close to it.
\end{observation}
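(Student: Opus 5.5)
The plan is to use the standard fact that the nearest-neighbor graph is a subgraph of the Euclidean minimum spanning tree, together with \Cref{clm:longest-edge}. Fix the tie-breaking rule from \Cref{sec:discrete-frechet-tsp}, so that every $v\in S$ has a unique nearest neighbor $\mathrm{nn}(v)$. The point $\mathrm{nn}(v)$ is the natural candidate for the close neighbor in $\MST(S)$.

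First, $\|v-\mathrm{nn}(v)\|$ is the length of an edge of $\NNG(S)$, so by \Cref{clm:longest-edge} it is at most $L=\eps^*$. Hence $v$ and $\mathrm{nn}(v)$ are close.

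Second, we must show that the edge $\{v,\mathrm{nn}(v)\}$ lies in $\MST(S)$. This is where care is needed, because the paper calls $\NNG(S)\subseteq\MST(S)$ ``well-known'', but with ties it holds only for a suitable MST. I would argue directly with the cut property. Let $\MST(S)$ be any minimum spanning tree and suppose $\{v,u\}\notin\MST(S)$, where $u=\mathrm{nn}(v)$. Adding $\{v,u\}$ creates a cycle, and that cycle contains another edge $\{v,w\}$ incident to $v$. Because $u$ is a nearest neighbor of $v$, we have $\|v-w\|\ge\|v-u\|$. If the inequality is strict, swapping the two edges yields a strictly lighter spanning tree, a contradiction. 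If it is an equality, then $w$ is itself a nearest neighbor of $v$. The edge $\{v,w\}$ is in $\MST(S)$ and has length $\|v-u\|\le\eps^*$, so $w$ is a neighbor of $v$ in $\MST(S)$ that is close to $v$.

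In both cases $v$ has an $\MST(S)$-neighbor within distance $\eps^*$, which proves the observation for every MST. Note also that $v$ always has at least one incident MST edge, since $n>2$ makes the tree connected on at least two vertices.

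The main obstacle is handling ties. The argument avoids it by not insisting that the specific edge $\{v,\mathrm{nn}(v)\}$ be in the tree; it suffices that some shortest edge at $v$ is. In fact, the simpler statement that some shortest edge at $v$ always belongs to every MST is exactly what the exchange argument gives.
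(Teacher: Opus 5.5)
Your proof is correct and follows essentially the paper's route. The paper likewise combines two facts: the nearest-neighbor distance of $v$ is at most $\eps^*$, which the paper gets from \Cref{lem:g_eps_no_degree_0_exists_division} ($G_{\eps^*}$ has no isolated vertex) where you use \Cref{clm:longest-edge}; and $\MST(S)$ contains a shortest edge at every vertex, which the paper attributes to Kruskal's algorithm while your cycle-exchange argument justifies it directly for an arbitrary MST with ties handled explicitly.
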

\begin{proof}
    Assume by contradiction that all neighbors of a vertex $v$ in $\MST(S)$ are far from $v$. By \Cref{lem:g_eps_no_degree_0_exists_division}, $v$ has a neighbor $u$ in $G_{\eps^*}$, and therefore $u$ is close to $v$. This contradicts the fact that the MST contains a nearest neighbor for each vertex (by Kruskal's algorithm).
\end{proof}

\vspace{5pt}

    We now show how to construct two curves $P$ and $Q$ that partition $S$ by traversing $\MST(S)$ in a DFS order, starting from a leaf vertex of $\MST(S)$ as a root, and coloring the vertices \textbf{red} and \textbf{blue}. The blue vertices will be in $P$, and the red vertices in $Q$. The order of the points along the curves would be the same order in which they were colored during the algorithm.

     Consider a vertex $v$ in the rooted tree $\MST(S)$. 
     Let $I_\text{close}(v)$ be the set of children of $v$ that are close to $v$, and similarly, $I_\text{far}(v)$ will be the set of children of $v$ that are far from $v$. In addition, let $L(v)=\{u\mid u \in I_\text{close}(v), I_\text{close}(u)=\emptyset\}$. In other words, $L(v)$ is the set of \dfn{lonely children} of $v$ --- those that are either leaves in $\MST(S)$, or that do not have close children --- and therefore must be in $v$'s star.
     Notice that by \Cref{obs:close_neighbor}, all the nodes in $L(v)$ are close to $v$.
     
     Let $v$ be the current vertex visited by the DFS algorithm.
     The invariant of our recursive DFS algorithm is that if $L(v)$ is empty, then there must be at least one vertex in $I_\text{close}(v)$.
     The algorithm has two main steps: in the first step we color vertices in red and blue, and the second step contains the recursive calls.

    \begin{figure}[h]
        \centering
        \includegraphics[page=1,scale=1.6]{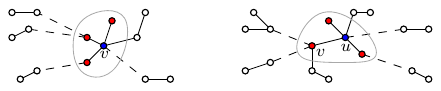}
        \caption{Coloring of $\MST(S)$. Dashed edges correspond to far neighbors. The star to which $v$ belongs is marked in gray.}
        \label{fig:min-length-coloring}
    \end{figure}

    \oldparagraph{Coloring.} In this step, we color the entire star that $v$ belongs to using Algorithm \ref{alg:coloring}. 
    Note that $v$ can be either a center or a leaf in that star. Also, Algorithm \ref{alg:coloring} colors exactly one vertex in \textbf{blue} and the rest in \textbf{red}.
    We progress in recursion on the vertices of the tree. In each recursive call, we activate the coloring algorithm on newly colored vertices in the following order:
     \begin{enumerate}
            \item First, activate the algorithm on children of vertices that are colored \textbf{red}, in reversed order (last colored first called).
            \item then, activate the algorithm on children of the \textbf{blue} vertex.
        \end{enumerate}     
    
    \begin{algorithm}[H]  
    \caption{Coloring for Min-Length}   \label{alg:coloring}
    \KwIn{Vertex $v$}
    \KwOut{Color updates}    
    \eIf{$L(v) \neq \emptyset$}{        
        color the leaf nodes in $L(v)$ \textbf{red}\;        
        color all the other nodes in $L(v)$ \textbf{red}\;        
    }{        
        Let $u$ be an arbitrary vertex in $I_{\text{close}}(v)$\;    
        color $v$ \textbf{red}\;        
        color $u$ \textbf{blue}\;        
        color the leaf nodes in $L(u)$ \textbf{red}\;        
        color all the other nodes in $L(u)$ \textbf{red}\;        
    }
    \end{algorithm}  
    
    \vspace{5pt}
    First, notice that when the algorithm is called with a vertex $v$ in a recursive step, then $v$ and its entire subtree were not colored yet. Moreover, we show that the following invariant holds in each step of the algorithm.
    \begin{claim}
        In each step of the algorithm, if $L(v)$ is empty, then there must be at least one vertex in $I_{close}(v)$.    
    \end{claim}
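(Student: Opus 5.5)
The plan is to argue by induction over the sequence of calls to Algorithm~\ref{alg:coloring}, using \Cref{obs:close_neighbor} together with the fact, noted above, that a vertex is uncolored together with its whole subtree when the algorithm is called on it. In fact I would prove a slightly stronger statement: whenever the algorithm is called on $v$ and $I_{\text{close}}(v)=\emptyset$, we reach a contradiction. This immediately gives the claim, since $L(v)\subseteq I_{\text{close}}(v)$.

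\emph{Base case.} The first call is on the root $r$, which is a leaf of $\MST(S)$. Its unique $\MST$-neighbor is its only child $c$. By \Cref{obs:close_neighbor}, $c$ is close to $r$, so $c\in I_{\text{close}}(r)$ and the claim holds.

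\emph{Inductive step.} Every later call is on a vertex $v$ that is a child of some already colored vertex $p$, where $p$ was colored as part of some star. Suppose $v$ is still uncolored and $I_{\text{close}}(v)=\emptyset$. Since $v$ has no close child, \Cref{obs:close_neighbor} forces its close $\MST$-neighbor to be its parent $p$. Hence $v\in I_{\text{close}}(p)$, and because $I_{\text{close}}(v)=\emptyset$ we get $v\in L(p)$. It remains to show that in every way $p$ can be colored, $v$ is colored at the same time, contradicting that $v$ is uncolored. I would go through the roles $p$ can play in a star produced by Algorithm~\ref{alg:coloring}:
\begin{enumerate}
\item $p$ was the argument of a call with $L(p)\neq\emptyset$. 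Then all of $L(p)$, including $v$, is colored in that call.
\item $p$ was the argument of a call with $L(p)=\emptyset$ and was colored red. This contradicts $v\in L(p)$.
\item $p$ was the chosen close child $u$ in the else-branch and was colored blue. Then $L(u)=L(p)\ni v$ is colored in that call.
\item $p$ was a red vertex of some $L(w)$. Then by definition $I_{\text{close}}(p)=\emptyset$, contradicting $v\in I_{\text{close}}(p)$.
\end{enumerate}
Every case yields a contradiction, so $I_{\text{close}}(v)\neq\emptyset$.

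The main obstacle is making this case analysis exhaustive. This means stating precisely which vertices each call colors, and in particular how the argument $v$ itself is treated in the branch $L(v)\neq\emptyset$, where it implicitly acts as the blue center. I would first fix this reading of Algorithm~\ref{alg:coloring}. I would then check that the recursion only ever calls vertices whose parent is colored and which are themselves uncolored. The second point follows from the DFS order, since a recursive call is made only on children of newly colored vertices.
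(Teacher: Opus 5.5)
Your proposal is correct and follows essentially the same route as the paper. Both assume $I_{\text{close}}(v)=\emptyset$, use \Cref{obs:close_neighbor} to get that the parent $t$ is close, so $v\in L(t)$, and then rule out every way $t$ could have been colored. The paper groups your cases 1 and 3 as ``$t$ blue'' and your cases 2 and 4 as ``$t$ red''. Your explicit handling of the root, and of the implicit blue coloring of the argument in the branch $L(v)\neq\emptyset$, just makes precise what the paper leaves implicit.
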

    \begin{proof}
        Assume by contradiction that in some step of the algorithm both $L(v)$ and $I_\text{close}(v)$ are empty. Then, by \Cref{obs:close_neighbor}, $v$ must have a parent $t$ in $\MST(S)$ such that $v \in I_\text{close}(t)$, and $t$ was already colored by the algorithm.
    Moreover, $v$ is in $L(t)$, because $I_\text{close}(v)=\emptyset$. Therefore, it is not possible that $t$ was colored \textbf{blue}, because then $v$ would have been already colored \textbf{red} and would not be called recursively. 
    In addition, $t$ is not a lonely child of its parent, because $I_\text{close}(t)\neq \emptyset$. Therefore, if $t$ was colored \textbf{red}, then it must be in step 2(a) of the algorithm, which means that $L(t)$ is empty, but this is not possible because $v\in L(t)$.
    \end{proof}

    Let $P=\{p_1,\dots p_k\}$ (resp. $Q=\{q_1,\dots q_m\}$) be the set of vertices that were colored \textbf{blue} (resp. \textbf{red}), in the order in which they were colored during the DFS scan.
    
    \begin{claim}\label{clm:running_time}
        The running time for computing $P$ and $Q$ is $O(n\log n)$.
    \end{claim}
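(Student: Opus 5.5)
The running time splits into a preprocessing phase and the coloring/DFS phase. I will show the preprocessing costs $O(n\log n)$ and the coloring phase costs $O(n)$.

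\textbf{Preprocessing.} First compute $\eps^*$ in $O(n\log n)$ time, using \Cref{thm:discrete-frechet-TSP} and \Cref{clm:longest-edge} (the longest edge of $\NNG(S)$). Next compute $\MST(S)$ in the plane in $O(n\log n)$ time, for example via the Delaunay triangulation, which has $O(n)$ edges and contains the Euclidean MST. Then pick a leaf of $\MST(S)$ as the root and orient the tree by a single BFS/DFS, in $O(n)$ time.

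\textbf{Annotating the tree.} For every vertex $v$, compute $I_\text{close}(v)$ and $I_\text{far}(v)$ by comparing each child edge length with $\eps^*$. Each tree edge is examined once, so this takes $O(n)$ total. In a second bottom-up pass, mark each vertex $u$ with the flag ``$I_\text{close}(u)=\emptyset$''. Then build $L(v)$ as the close children carrying that flag. Again each edge is charged once, and $\sum_v |L(v)| \le n-1$, so this is $O(n)$.

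\textbf{The recursive coloring.} Each call of Algorithm~\ref{alg:coloring} on a vertex $v$ colors a star. It does work proportional to the star's size: either $O(1+|L(v)|)$, or $O(1+|L(u)|)$ for the chosen $u\in I_\text{close}(v)$. Every vertex is colored exactly once, since a vertex is only passed to a call while it and its subtree are uncolored (as noted before the invariant claim). Hence the total coloring work is $O(n)$.

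\textbf{Scheduling the recursive calls.} After coloring a star, the algorithm launches calls on the children of its red vertices in reverse coloring order, and then on the children of the blue vertex. This only requires iterating over the child lists of the star's vertices, skipping children already colored (the lonely children included in the star). Each child list is scanned once over the whole run, because each vertex belongs to exactly one star. Appending each newly colored vertex to $P$ or $Q$ is $O(1)$. The recursion depth can be $\Theta(n)$, which does not affect the bound; it can be simulated with an explicit stack.

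\textbf{Main obstacle.} The delicate step is the accounting of the scheduling phase: no vertex's child list may be scanned more than once, and no vertex may be recolored. Both facts follow from the invariant that each call receives an uncolored vertex whose subtree is also uncolored, together with the fact that the stars partition the vertex set. Summing the phases gives $O(n\log n)+O(n)=O(n\log n)$.
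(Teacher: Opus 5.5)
Your proof is correct and follows the same outline as the paper's: $O(n\log n)$ preprocessing to build the Euclidean MST (and $\eps^*$), followed by a linear-time DFS-style coloring pass. The only difference is how the coloring phase is shown to be linear. The paper appeals to the constant bound $|L(v)|\le 5$ in the EMST, whereas your amortized argument (the stars partition $S$, and each child list is scanned once) gives $O(n)$ without needing any degree bound.
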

    \begin{proof}
        Calculating the Euclidean MST for the set $S$ takes $O(n\log n)$ time.  
        A standard DFS traversal over the MST also requires $O(n)$ time.
        The additional overhead in the algorithm comes from recursively scanning the $L(v)$ children of each vertex $v$.  
        Since $|L(v)| \le 5$ in the EMST, this adds at most $O(n)$ additional operations. Thus, the total running time remains $O(n\log n)$.
    \end{proof}
    
    \vspace{5pt}
    \begin{claim}\label{clm:small_distance}
        $\dfd(P,Q) \le \eps^*$.
    \end{claim}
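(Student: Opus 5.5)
The natural route is to exhibit an explicit paired-walk along $P$ and $Q$ of cost at most $\eps^*$, built from the stars produced by Algorithm~\ref{alg:coloring}. Each call of Algorithm~\ref{alg:coloring} colors one group of vertices: exactly one \textbf{blue} vertex and a non-empty set of \textbf{red} vertices. In the first branch the center is $v$ with the red vertices $L(v)$. Note that in this branch $v$ itself is the vertex that must receive the blue color; I read the algorithm as coloring $v$ \textbf{blue}, which is implicit in the statement that exactly one vertex per call is blue. In the second branch the center is $u$, with red vertices $v$ and $L(u)$. I will first check that every red vertex in a group is close to the group's blue center. In the first branch, $L(v)\subseteq I_\text{close}(v)$. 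In the second branch, $u\in I_\text{close}(v)$, so $v$ is close to $u$, and $L(u)\subseteq I_\text{close}(u)$. The invariant proved in the previous claim guarantees that in the second branch the vertex $u$ exists, so every group is a genuine star with one blue center and at least one red leaf.

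Next I will show that the groups partition $S$. No vertex is colored twice, since a recursive call is made only on an uncolored vertex whose whole subtree is uncolored. No vertex is skipped, since after each group is colored the recursion is launched on all children of its colored vertices, so every vertex of $\MST(S)$ is eventually reached in the DFS. I will need to check this carefully on one point: the lonely children in $L(\cdot)$ are colored as leaves of a star, yet their own far children must still receive recursive calls. The rule ``activate on children of red vertices'' covers this.

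Then I order the groups $S_1,\dots,S_k$ in the order in which they were colored. Every vertex in $S_i$ is colored before every vertex in $S_{i+1}$, so the restriction of the coloring order to blue vertices gives $P=(\textbf{blue}(S_1),\dots,\textbf{blue}(S_k))$. The red vertices likewise appear in $Q$ as consecutive blocks $\textbf{red}(S_1),\dots,\textbf{red}(S_k)$, each block a non-empty contiguous subsequence. Hence $\pi=\{(\{\textbf{blue}(S_i)\},\textbf{red}(S_i))\}_{i=1}^k$ is a valid paired-walk, because one side of every pair is a singleton. Its cost is the maximum distance between a star center and its leaves, which is at most $\eps^*$. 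Therefore $\dfd(P,Q)\le\eps^*$, exactly as in the proof of \Cref{lem:g_eps_no_degree_0_exists_division}.

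The main obstacle is the bookkeeping of the recursion. I must confirm that each star's vertices are colored contiguously in time, with no interleaving from another call. I must also confirm that a red vertex $v$ from the second branch is never also a lonely child already colored by its parent's star. The earlier invariant proof essentially rules out the latter, since $t$ blue implies $v$ was already colored and so is not called. I would argue both facts directly from the fact that each call of Algorithm~\ref{alg:coloring} is atomic and only touches $v$, a child $u$ of $v$, and lonely children.
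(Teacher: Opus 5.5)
Your proposal is correct and follows essentially the same route as the paper: each call of Algorithm~\ref{alg:coloring} colors a star whose blue center is within $\eps^*$ of all its red leaves, and ordering the stars by coloring time yields a paired-walk of cost at most $\eps^*$. Your extra bookkeeping fills in details the paper leaves implicit: that the stars partition $S$, that each star's red vertices form a contiguous block of $Q$, and reading the first branch as coloring $v$ \textbf{blue}.
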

    \begin{proof}
        We show that in each step of the algorithm, we color a star in $\MST(S)$ with edges of length at most $\eps^*$. The center is colored \textbf{blue}, and the leaves \textbf{red}. Let $v$ be the current vertex. There are two cases:
        \begin{itemize}
            \item If $L(v)$ is not empty, then $v$ is colored \textbf{blue} and the nodes in $L(v)$ are colored \textbf{red}. Since $L(v) \subseteq I_\text{close}(v)$ (by \Cref{obs:close_neighbor}), this red-blue star has edges of length at most $\eps^*$.
            \item If $L(v)=\emptyset$, then the algorithm picks a vertex $u \in I_\text{close}(v)$ and color it \textbf{blue}. This vertex becomes the center of a star with its children $v\cup L(u)$ which are colored \textbf{red}. Since $L(v) \subseteq I_\text{close}(u)$, we again obtain a red-blue star with edges of length at most $\eps^*$.        
        \end{itemize}
        Since the vertices of $P$ and $Q$ are ordered by the time they were colored, we get that the sequence of starts corresponds to a paired-walk of cost at most $\eps^*$ between $P$ and $Q$, as required.
    \end{proof}

    Next, we bound the lengths of the curves $P$ and $Q$ in relation to $\MST(S)$. 
    Notice that the order in which we color the \textbf{blue} vertices (the vertices of $P$) follows a classic DFS preordering, and therefore we clearly have $\ell(P)\le 2\cdot \ell(\MST(S))$. However, the order in which we color the \textbf{red} vertices slightly differs from a classic DFS preordering. However, since all the children of $v$ that are colored in this step are close to $v$, the additional traversal overhead is small. The following claim together with \Cref{clm:small_distance} implies \Cref{thm:min-max-length}.

    \begin{figure}[!ht]
        \centering
        \includegraphics[page=4,scale=1.6]{figures/min-length}
        \caption{An illustration of the algorithm. The root vertex is $v_1$, and the algorithm runs on $v_1,\dots,v_7$ in this order. The stars are marked in gray, and the dashed edges mark far vertices. The orange edges are the set $W$ of edges that connect the stars.}
        \label{fig:DFS_alg}
    \end{figure}

    \begin{claim}\label{clm:approx}
        $\max\{\ell(P),\ell(Q)\}\le 2.75 \cdot \ell(\MST(S))$.
    \end{claim}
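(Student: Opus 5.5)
We bound $\ell(P)$ and $\ell(Q)$ separately, each by a charging argument to edges of $\MST(S)$. Both curves are shortcut versions of walks on the tree, so by the triangle inequality it suffices to exhibit, for each curve, a walk in the plane that visits its vertices in the coloring order and has length at most $2.75\cdot\ell(\MST(S))$.

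For $P$ we use the fact that the blue vertices are colored in an order consistent with a DFS preorder of $\MST(S)$ rooted at the chosen leaf. Indeed, every recursive call is made on an uncolored subtree, and the calls are issued in a fixed nested order. Hence consecutive blue vertices $p_i,p_{i+1}$ are connected by the tree path between them in the Euler tour. The sum of these paths is at most the Euler tour length $2\ell(\MST(S))$, and shortcutting gives $\ell(P)\le 2\ell(\MST(S))\le 2.75\,\ell(\MST(S))$.

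For $Q$ we compare the red order with the Euler tour and account for detours. Within a single star, the red vertices are the lonely children in $L(v)$ (or $v$ together with $L(u)$). Each of them lies within distance $\eps^*$ of the blue center and is joined to it by a tree edge. We plan to walk along the Euler tour and, whenever the tour reaches a star center, to visit its red leaves by out-and-back excursions along the star edges. Each such edge is already traversed twice by the Euler tour, so the reordering (leaves first, then the other members of $L(v)$, and the reversed-order recursion on red vertices) costs only a bounded amount of extra retracing.

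The delicate case is a star of the second type, in which $v$ is red and the center $u$ is its child. Here the red walk goes from $v$ into $u$'s lonely children before the recursion returns to the children of the red vertices in reversed order. This creates backtracking along the connecting edges (the orange set $W$ in \Cref{fig:DFS_alg}). We will charge this extra movement to star edges, each of length at most $\eps^*$ and each a tree edge of length at most any far edge incident to the star. The target is to show that the total additional length is at most $0.75\,\ell(\MST(S))$, for instance by showing that every tree edge is charged at most $3/4$ of its length beyond the Euler tour, or that every star edge is traversed at most $2.75$ times in total over the red walk.

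This charging step is the main obstacle. The constant $2.75$ suggests a case analysis on the star shape: whether $v$ is the center or a leaf, and how many lonely children there are, where $|L(v)|\le 5$ holds in the planar EMST. We would first try to bound the extra cost per star by $3/4$ of the total length of that star's tree edges. We would then sum over the stars, which partition the vertex set, so that their edge sets are disjoint subsets of $\MST(S)$. This gives $\ell(Q)\le 2\ell(\MST(S))+0.75\,\ell(\MST(S))$, which is the claim.
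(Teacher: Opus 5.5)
The decisive step is missing. For $Q$ you only state the target (an overhead of $0.75\,\ell(\MST(S))$ over the Euler tour) and call it the main obstacle, and the per-star versions of your plan cannot deliver it. Take a blue center $x$ whose lonely children $u_1,\dots,u_4$ all lie at distance $\eps^*$ and are not leaves. The tree walk through the red vertices in coloring order goes out and back along $\{x,u_1\},\{x,u_2\},\{x,u_3\}$ while the star is colored. It does so once more when the recursion later descends into the subtrees of $u_3,u_2,u_1$. So these three edges are traversed four times; only $\{x,u_4\}$, whose subtree is processed immediately, is traversed twice. Thus ``at most $2.75$ traversals per star edge'' is false, and the excess $6\eps^*$ exceeds the per-star budget $\tfrac34\cdot 4\eps^*=3\eps^*$. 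The real source of overhead is these non-leaf lonely children, in either star type, not the second type as such (and $|L(x)|\le 4$, since the root is a leaf).

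The idea you are missing, which the paper uses, is to charge \emph{outside} the star. Since $I_{\text{close}}(u_i)=\emptyset$, a non-leaf lonely child $u_i$ has a far child $w_i$. The edge $\{u_i,w_i\}$ is longer than $\eps^*\ge\|x-u_i\|$, lies in no star, is traversed only twice, and can be claimed by no other edge, so it absorbs part of the excess of $\{x,u_i\}$. The rest goes to the edge of the last-colored lonely child, which is traversed only twice; coloring the farthest lonely child last makes that affordable. To land exactly at $2.75$ you also need the slack of the far edge below this last child. The paper also runs the count on a single walk $T$ through all of $S$ in coloring order, which bounds $\ell(P)$ and $\ell(Q)$ at once.

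Your argument for $P$ also fails as written, because the blue order is not in general a restriction of a DFS preorder. In a star of the second type ($v$ red, $u$ blue), $u$ is colored before $v$'s other subtrees. Those subtrees are processed in step~1 (as children of the red $v$), i.e.\ after the subtrees of $L(u)$ but before the remaining subtrees of $u$ (step~2). The edge $\{v,u\}$ is then crossed three or four times by the blue walk, and this is not harmless. Take tiny clusters $C_j=\{u_j,t_j,t'_j,t''_j\}$ on a line, consecutive ones just under $\eps^*$ apart, with $\eps^*$ realized by an extra pair of points. Then $\MST(S)$ contains the path $u_j,t_j,t'_j,t''_j$ inside $C_j$ and the edge $\{t_j,u_{j+1}\}$ between clusters. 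Suppose the recursion is called on $t_0$ and at each $t_j$ picks $u=u_{j+1}$ (an allowed choice). Then it is next called on $t'_j$ and then on $t_{j+1}$, so the blue vertices appear in the order $u_1,t'_0,u_2,t'_1,\dots$. The curve $P$ crosses every gap three times, and $\ell(P)$ approaches $3\,\ell(\MST(S))$ as the number of clusters grows. So the order of the recursive calls must be fixed before any Euler-tour argument applies, whether for $P$ alone or for a walk through all points. For instance, process the remaining children of the red vertex $v$ only after those of $u$; then the blue order is DFS-consistent and $\ell(P)\le 2\,\ell(\MST(S))$ does hold.
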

    \begin{proof}
        Let $T$ be the path on all the points in $S$, which is obtained by traversing the edges of $\MST(S)$ in the order in which the algorithm colors the vertices, regardless of their color. Clearly $\ell(T)\ge \max\{\ell(P),\ell(Q)\}$. We show that $\ell(T)\le 2.75\cdot\ell(\MST(S))$.

        Recall that the degree of any vertex in $\MST(S)$ is at most $5$. Since we choose the root of $\MST(S)$ to be a leaf, then for every vertex $v$ in the rooted tree, $L(v)$ may contain at most $4$ vertices.
        Consider a step of the algorithm were the current vertex is $v$. 
        If $v$ was colored \textbf{blue}, denote $L(v) = u_1, \dots, u_k$ for $1\le k\le 4$. Then, the subpath of $T$ that was added in this step is $T_v = \{v,u_1,v,u_2,v,\dots,u_k\}$. This is because, in $T$, we are moving from $u_i$ to $u_{i+1}$ through $v$, since $\{u_i,u_{i+1}\}$ is not an edge in $\MST(S)$. 
        If $v$ was colored \textbf{red}, the algorithm picked a vertex $u \in I_{close}(v)$. Denote $L(u) = u_1, \dots, u_k$ for $1\le k\le 4$. The subpath of $T$ that was added in this step is $T_v=\{v,u,u_1,u,u_2,u,\dots,u_k\}$.
        
        We show how to charge the edges of $\MST(S)$ for each such star-subpath and for each of the edges that are connecting between star-subpaths.
        First, notice that in each star-subpath $T_v$, each edge of the star is traversed at most twice.
        Let $U = \cup_{x \in P} T_x$ be the set of edges that are charged for the star-subpaths themselves. Because the edges of the stars are disjoint, each edge in $U$ is charged at most twice.
        
        Let $W$ be the set of edges that are traversed in $T$ when connecting between any two star-subpaths. Each edge of $W$ is charged at most twice because we traverse the stars following the classic DFS order. Therefore, in both sets $U, W$ each edge is charged at most twice. Notice that $U, W$ are distinct.
        
        The edges in $U$ have length at most $\eps^*$, so edges of length larger than $\eps^*$ can only appear in $W$. In addition, edges that are incident to leaves in $\MST(S)$ appear only in $U$, because they do not connect between star-subpaths. We conclude that leaf-edges are charged only twice, and edges of length larger than $\eps^*$ are also charged only twice.

        Consider an edge $\{v,u\} \in T_v$ such that $L(v)$ is empty and $u \in I_{close}(v)$. Notice that the edge $\{v, u\}$ is charged only once in $U$ (it appears once in $T_v$) and once in $W$ (to connect $u$ with the centers of stars in $v$'s subtree), so in total it is charged at most twice.   
        
        The only case left to handle is edges $\{x,u_i\}$ in a star such that $u_i \in L(x)$ ($x$ can be either the current vertex or its child $u\in I_\text{close}(v)$). In this case, $I_{close}(u_i)=\emptyset$. If $u_i$ is not a leaf, then there exists an edge $\{u_i, w_i\}$ such that $w_i\in I_{far}(u_i)$. Since the algorithm recursively runs on $w_i$, the edges $\{u_i,w_i\}$ is in $W$. In other words, for every $\{x, u_i\}\in U$ there exists an edge $\{u_i, w_i\}\in W \setminus U$ such that $\|u_i-w_i\|> \eps^*$, which is charged only twice. Therefore, if $\{x,u_i\}$ is charged four times (twice in $U$ and twice in $W$), we charge it three times, and transfer the forth charge to $\{u_i, w_i\}$, so that both edges are charged only three times in total. Note that no other edge can transfer its charge to $\{u_i, w_i\}$, because $w_i$ has to come after $u_i$ in the order of the traversal.
        
        This gives a bound of $3 \cdot \ell (\MST(S))$. To further improve the bound, notice that the last edge $\{x, u_k\}$ is charged only once in $U$.
        Moreover, it is also charged only once in $W$, because the children in its subtree are the first to be called recursively in this step, so $\{x, u_k\}$ only appears on the subpath that goes back to $x$.
        In the worst case, when $k = 4$, we charge the first three edges $\{x,u_1\}, \{x,u_2\}, \{x,u_3\}$ a total of $3$ times (by transferring one charge to the corresponding edge in $W$), and $\{x, u_4\}$ is charged only twice.
        By choosing the farthest child to be colored last, i.e., $\|x-u_4\|\ge\max\{\|x-u_1\|, \|x-u_2\|, \|x-u_3\|\}$, each of the first three edges can transfer a charge of $\tfrac{1}{4}$ to $\{x, u_4\}$, resulting in a total charge of at most $2 \tfrac{3}{4}$ per edge of $\MST(S)$, as claimed.
    \end{proof}

\noindent \Cref{thm:min-max-length} follows from \Cref{clm:running_time}, \Cref{clm:small_distance}, and \Cref{clm:approx}.

\subsection{Min-sum discrete \frechet-TSP}
In this section we consider the case where the goal is to minimize the sum of the lengths. The following theorem is a corollary of \Cref{thm:min-max-length}.

\begin{theorem}\label{thm:min-sum-length}
    Given a set $S$ of $n$ points in the plane, two curves $P,Q$ that partition $S$ such that $\dfd(P,Q) = \eps^*$ and $\ell(P) + \ell(Q)\le 4.75\cdot \ell(\MST(S))$ can be found in $O(n \log n)$ time.
\end{theorem}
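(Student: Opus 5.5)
We will reuse the construction behind \Cref{thm:min-max-length} unchanged: the same curves $P$ (blue vertices) and $Q$ (red vertices) produced by the DFS coloring of $\MST(S)$. By \Cref{clm:small_distance} they satisfy $\dfd(P,Q)\le\eps^*$, and hence $\dfd(P,Q)=\eps^*$ by optimality. By \Cref{clm:running_time} they are computed in $O(n\log n)$ time. So only the length bound needs an argument, and simply adding the two bounds of \Cref{clm:approx} would give $5.5$, which is too weak. We therefore bound $\ell(P)$ and $\ell(Q)$ separately and more carefully.

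For $P$ we use the observation made just before \Cref{clm:approx}. The blue vertices are colored in a DFS preorder of the rooted tree $\MST(S)$, so by shortcutting along the Euler tour (triangle inequality) we get $\ell(P)\le 2\cdot\ell(\MST(S))$.

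For $Q$ we use the bound from the proof of \Cref{clm:approx}. The red vertices appear along the walk $T$ in the order they are colored, and $T$ traverses all of $S$. Shortcutting $T$ to the red vertices gives $\ell(Q)\le\ell(T)\le 2.75\cdot\ell(\MST(S))$.

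Summing the two bounds gives $\ell(P)+\ell(Q)\le 4.75\cdot\ell(\MST(S))$, as claimed.

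The only step that needs care is the claim that blue vertices really appear in DFS preorder. The issue is that in the case $L(v)=\emptyset$ the blue vertex is a child $u$ of $v$, and $v$ itself is red. So we must check that the sequence of blue centers is consistent with some DFS preorder of $\MST(S)$. The recursion order described above (the red vertices' children are called in reverse order, then the blue vertex's children) should give this: each star is colored exactly when the DFS first enters its topmost vertex. If that check turned out to be delicate, the fallback would be to charge $\ell(P)$ directly: consecutive blue centers are joined by subpaths of $T$ that avoid the inner star edges, so only the $W$-charges apply, and each edge is charged at most twice. This would again give the factor $2$.
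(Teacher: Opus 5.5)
\textbf{There is a genuine gap, and it sits exactly at the step you flagged.} Your proposal is the paper's proof. You bound $\ell(Q)\le\ell(T)\le 2.75\,\ell(\MST(S))$ via the proof of \Cref{clm:approx}, and $\ell(P)\le 2\,\ell(\MST(S))$ because the blue centers supposedly come in DFS preorder. The paper asserts the preorder claim in one line, so citing it does not close the gap.

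Your observation that each star is colored when the recursion first reaches its topmost vertex is true but not enough, because the recursion itself is not a DFS of $\MST(S)$. The stated order calls children of red vertices first (last colored first), then children of the blue vertex. So when $L(v)=\emptyset$, after coloring $\{v,u\}\cup L(u)$ the calls are: children of $L(u)$; then the remaining children of $v$, which lie outside the subtree of $u$; then the remaining children of $u$. The subtree of $u$ is therefore split. Along the blue sequence, the sides of $\{v,u\}$ alternate: outside, $u$, children of $L(u)$, other children of $v$, other children of $u$, outside. Hence any walk in $\MST(S)$ visiting the blue centers in this order crosses $\{v,u\}$ three or four times, as soon as $v$ has another child and $u$ has a child outside $L(u)$. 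So the blue order is not a subsequence of any DFS preorder. Your fallback fails for the same reason: the connecting subpaths between consecutive centers do run through the star edge $\{v,u\}$, which receives up to three such charges.

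\textbf{A concrete counterexample.} Take $\tau>0$ tiny and $r=(0,0)$. For $1\le j\le m$ put $u_{j-1}=(j,0)$, $v_j=(j+\tau,0)$, $w_j=(j+\tau,\tau)$, $w'_j=(j+\tau,2\tau)$. Finally put $u_m=(m+1,0)$ and $t=(m+1+\tau,0)$.
\begin{itemize}
\item Then $\eps^*=1$, and $\MST(S)$ consists of the $\tau$-edges inside each group plus $\{r,u_0\}$ and $\{v_j,u_j\}$. Every edge is close, and $r$ is a leaf.
\item At each $v_j$ we have $L(v_j)=\emptyset$ and $I_{\text{close}}(v_j)=\{w_j,u_j\}$. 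Picking $u=u_j$ is allowed, since the choice is arbitrary. It colors $v_j$ red and $u_j$ blue.
\item Next, $w_j$ is called as a child of the red $v_j$ and becomes a blue center. Only afterwards is $v_{j+1}$ called.
\item So $P=(u_0,u_1,w_1,u_2,w_2,\dots,u_m,w_m)$ alternates between $x\approx j+1$ and $x\approx j$. This gives $\ell(P)\approx 3m-1$, while $\ell(\MST(S))\approx m+1$.
\end{itemize}
Thus $\ell(P)\le 2\,\ell(\MST(S))$ fails for $m\ge 4$. Here $\ell(Q)\approx m$, so the theorem's inequality itself survives in this instance. However, $\ell(T)\ge\ell(P)$, so for large $m$ the same example also breaks $\ell(T)\le 2.75\,\ell(\MST(S))$. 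The bound you import for $Q$ rests on the same miscount: the paper charges $\{v,u\}$ only once in $W$.

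\textbf{Possible repair.} When $L(v)=\emptyset$, process the whole subtree of $u$ (children of $L(u)$, then the other children of $u$) before the other children of $v$. Then every $\MST$ edge separates consecutive blue centers at most twice, and $\ell(P)\le 2\,\ell(\MST(S))$ follows by shortcutting. The bound on $\ell(T)$, and hence on $\ell(Q)$, would then have to be re-proved for the new order.
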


\begin{proof}
    Let $P$ and $Q$ be the curves obtained by the algorithm in the previous section. Then by \Cref{thm:min-max-length} $\ell(Q) \le 2.75 \cdot \ell(\MST(S))$. 
    For $P$, note that it follows a DFS traversal of the $\MST(S)$ restricted to the centers. 
    Since each edge is used at most twice and no vertex is repeated as in $Q$, this gives $\ell(P) \le 2 \cdot \ell(\MST(S))$. 
    Combining the two bounds yields $\ell(P) + \ell(Q) \le 4.75 \cdot \ell(\MST(S))$.
\end{proof}

\noindent This result can be improved by using a $(1+\eps)$ approximation for TSP (for $\eps>0$) on one of the curves. We make use of the algorithm of Kisfaludi-Bak, Nederlof and Węgrzycki \cite{KNW25} which computes in $2^{O(\varepsilon^{1-d})} n \log n$ time.
We get the following result.

\begin{theorem}
    Given a set $S$ of $n$ points in $\reals^d$, two curves $P,Q$ that partition $S$ such that $\ell(P) + \ell(Q)\le (4+\eps) \cdot \ell(\MST(S))$ can be found in $O(n \log n)$ time for constant $d, \varepsilon$.
\end{theorem}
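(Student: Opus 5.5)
The plan is to keep the partition produced by the algorithm of the previous section, which already satisfies $\dfd(P,Q)\le\eps^*$ by \Cref{clm:small_distance}. Only the \emph{order} of the vertices inside one curve will change. The observation that makes this possible is that, for the discrete \frechet\ distance, the order of the stars along the paired walk is what matters. The stars can therefore be re-ordered freely, as long as both curves are re-ordered consistently: $P$ visits the centers, and $Q$ visits the leaves of the stars, grouped star by star.

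Concretely, let $\mathcal{S}_1,\dots,\mathcal{S}_k$ be the stars produced by the coloring, with centers $c_i=\textbf{blue}(\mathcal{S}_i)$. Any permutation $\sigma$ of $[k]$ gives curves $P_\sigma=(c_{\sigma(1)},\dots,c_{\sigma(k)})$ and $Q_\sigma=(\textbf{red}(\mathcal{S}_{\sigma(1)}),\dots,\textbf{red}(\mathcal{S}_{\sigma(k)}))$. The sequence of stars is still a paired walk of cost at most $\eps^*$. We choose $\sigma$ to be the order of a $(1+\eps)$-approximate TSP path on the set of centers $C=\{c_1,\dots,c_k\}$, computed with the algorithm of \cite{KNW25} in $2^{O(\eps^{1-d})}n\log n$ time. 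This gives
\[
\ell(P_\sigma)\le(1+\eps)\,\ell(\TSP(C))\le(1+\eps)\,\ell(\TSP(S))\le 2(1+\eps)\,\ell(\MST(S)),
\]
using monotonicity of TSP under taking subsets (shortcutting) and the doubling bound $\ell(\TSP(S))\le 2\ell(\MST(S))$.

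It remains to bound $\ell(Q_\sigma)$. Inside star $i$, the red points are traversed in some order. Moving from a red point of star $\sigma(j)$ to a red point of star $\sigma(j+1)$, we detour through the centers: each red point $r$ is within $\eps^*$ of its center, so consecutive red points cost at most the corresponding distances in $P_\sigma$ plus $2\eps^*$ per step. Summed naively, this gives $\ell(Q_\sigma)\le \ell(P_\sigma)+2\eps^*\cdot|Q|$. That is not obviously bounded by $2\ell(\MST(S))$, so the charging must be sharper. The bound to aim for is $\ell(Q_\sigma)\le\ell(P_\sigma)+2\sum_{r\in Q}\|r-c(r)\|$, where $c(r)$ is the center of $r$'s star. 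Each star edge $\{r,c(r)\}$ is an edge of $\MST(S)$ (the stars are built from tree edges), and the stars are disjoint, so $\sum_r\|r-c(r)\|\le\ell(\MST(S))$. This step is the main obstacle. The triangle-inequality detour gives exactly this bound: going $r\to c(r)\to r'$ within a star, and $r\to c(r)\to c'\to r'$ across stars, uses each star edge at most twice. Hence
\[
\ell(P_\sigma)+\ell(Q_\sigma)\le 2\,\ell(P_\sigma)+2\,\ell(\MST(S))\le(4+4\eps)\,\ell(\MST(S))+\ldots
\]

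This overshoots $4+\eps$, so the accounting must be rebalanced. Instead, bound $\ell(P_\sigma)$ directly by $(1+\eps)\ell(\TSP(C))$ and $\ell(Q_\sigma)\le(1+\eps)\ell(\TSP(C))+2\ell(\MST(S))$. The target is then to show $\ell(\TSP(C))\le\ell(\MST(S))$ up to the allowed slack. If that fails, fall back on comparing with $\TSP(S)$, reading the theorem's bound as stated, and rescaling $\eps$.

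The running time is $O(n\log n)$ for the MST and stars (\Cref{clm:running_time}), plus the approximate TSP on $C$, plus $O(n)$ to assemble the curves. This is $O(n\log n)$ for constant $d$ and $\eps$.
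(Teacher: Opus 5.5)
Your construction matches the paper's. You re-order the stars along a $(1+\eps)$-approximate TSP path on the centers, let $Q$ follow star by star, and charge the detours to star edges, which gives $\ell(Q_\sigma)\le \ell(P_\sigma)+2\sum_{r}\|r-c(r)\|\le \ell(P_\sigma)+2\,\ell(\MST(S))$. That part is correct.

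\textbf{The gap is the final comparison with $\MST(S)$.} Against $\MST(S)$ your chain only yields $2(1+\eps)\,\ell(\TSP(C))+2\,\ell(\MST(S))\le(6+4\eps)\,\ell(\MST(S))$. The repair you propose, $\ell(\TSP(C))\le\ell(\MST(S))$ up to slack, is false by a constant factor. Take a point $o$ and three legs of $m$ points each, spaced $1$ apart, in directions $120^\circ$ apart.
\begin{itemize}
\item Then $\eps^*=1$, $\Gepso$ is the spider itself, and $\ell(\MST(S))=3m$.
\item Each tip has a single neighbour in $\Gepso$, so the tip or its neighbour is a center.
\item Hence $C$ contains three points at distance at least $m-1$ from $o$, pairwise at distance at least $\sqrt3(m-1)$.
\item Therefore $\ell(\TSP(C))\ge 2\sqrt3(m-1)>3m$ once $m\ge 8$.
\end{itemize}
So this route cannot give the $\MST(S)$ bound, and the proposal as written does not establish the inequality.

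Neither does the paper. Its proof bounds $\ell(P)\le(1+\eps)\,\ell(\TSP(S))$ and $\ell(Q)\le\ell(P)+2\,\ell(\MST(S))\le\ell(P)+2\,\ell(\TSP(S))$. It then concludes $\ell(P)+\ell(Q)\le(4+\eps)\,\ell(\TSP(S))$ after rescaling $\eps$. The benchmark actually proved is therefore $\TSP(S)$, which is weaker than $\MST(S)$. Your fallback is exactly this argument and is complete under that reading. Commit to it rather than leaving an open target this method cannot reach. You are also right to keep the constraint $\dfd(P,Q)\le\eps^*$, which the statement omits. Without it the claim is trivial: take one curve to be a single point and the other an approximate TSP path on the rest.

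A smaller issue is the running time. You take the stars from the $\MST$-based coloring of the previous section and cite \Cref{clm:running_time}, but that is a planar result. In $\reals^d$ with $d\ge 3$, computing the exact Euclidean MST in $O(n\log n)$ time is not known; compare the paper's remark on using the MST for $d>2$. The paper instead uses the stars of \Cref{lem:g_eps_no_degree_0_exists_division} built on $\NNG(S)$, which is computable in $O(n\log n)$ time for fixed $d$. Since $\NNG(S)\subseteq\MST(S)$, your charging bound $\sum_r\|r-c(r)\|\le\ell(\MST(S))$ still holds. The MST is then needed only in the analysis, never computed.
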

\begin{proof}
    Given the set of stars $S_1,\dots,S_k$ that were obtained in the proof of \Cref{lem:g_eps_no_degree_0_exists_division}, each star $S_i$ has a \textbf{blue} vertex that is connected in $S_i$ to a set of \textbf{red} vertices. 
    We run an algorithm that computes a $(1+\eps)$ approximation for the TSP. On the \textbf{blue} vertices, and obtain a curve $P$ with the length $(1+\eps)\cdot\ell(\TSP(S))$.
    
    We then construct the curve $Q$ on the \textbf{red} vertices following the order of the \textbf{blue} vertices in $P$, i.e., for each vertex $v\in P$ that corresponds to the star $S_i$, we connect the red vertices of $S_i$, and then connect these red paths according to the TSP order on the \textbf{blue} vertices. We can bound the length of $Q$ by the length of a curve traversing $P$ with detours for traversing the \textbf{red} vertices. The sum of lengths of these detours is bounded by traversing each edge of a star twice, which is bounded by $2\cdot\ell(\MST(S))$. We therefore get that $\ell(Q) \le \ell(P) + 2\cdot\ell(\TSP(S))$ and thus $\ell(Q) + \ell(P) \le (4+\eps)\cdot\ell(\TSP(S))$.
    \end{proof}

\subsection{Comparing to the optimal solution}
In previous sections, we compared the solution obtained from our algorithm to $\TSP(S)$. We now show that it gives a constant approximation comparing to the optimal solution. 

\begin{observation}\label{obs:tsp-s-less-sum-any-dividion}
    Let $P,Q$ be two curves that partition $S$, such that $\dfd(P, Q) = \eps$ for some $\eps > 0$. Then $\ell(\TSP(S))\le \ell(P)+\ell(Q)+\eps$.
\end{observation}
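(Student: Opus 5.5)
The plan is to exhibit one explicit path through all of $S$ whose length is at most $\ell(P)+\ell(Q)+\eps$. Since $\TSP(S)$ is a shortest path on $S$, this gives the bound immediately. The natural candidate is to traverse $P$ from its first vertex to its last, then jump to an endpoint of $Q$ and traverse all of $Q$. This visits every point of $S$ exactly once, because $P$ and $Q$ partition $S$. Its length is $\ell(P)+\ell(Q)$ plus the length of the one connecting edge.

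The key step is to bound that connecting edge by $\eps$. I will use the definition of a paired-walk $\pi=\{(P_i,Q_i)\}_{i=1}^k$ of cost $\eps$ realizing $\dfd(P,Q)=\eps$. Since the $P_i$ and $Q_i$ partition the vertex sequences in order, the last vertex of $P$ lies in $P_k$ and the last vertex of $Q$ lies in $Q_k$. Hence the two last vertices are matched in $\pi$, and their distance is at most $\text{cost}(\pi)=\eps$. The same argument shows the first vertices are within distance $\eps$ as well.

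So the tour I will use is: traverse $P$ forward, step from the last vertex of $P$ to the last vertex of $Q$, then traverse $Q$ backward. Its length is $\ell(P)+\eps'+\ell(Q)$ for some $\eps'\le\eps$, and therefore $\ell(\TSP(S))\le\ell(P)+\ell(Q)+\eps$.

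The only point needing care is the claim that the endpoints are matched. This rests on the paired-walk being an ordered partition into consecutive blocks, with $P_k$ containing the final vertex of $P$ and $Q_k$ the final vertex of $Q$. This follows from the equivalence with the Eiter–Mannila coupling definition, which starts at $(P[1],Q[1])$ and ends at the pair of last vertices. I do not expect a real obstacle beyond stating this carefully.
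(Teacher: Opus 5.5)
Your proposal is correct and matches the paper's proof: concatenate $P$ and $Q$ into one path through $S$ and bound the single connecting edge by $\eps$. You also supply a detail the paper leaves implicit, namely that the join must connect matched endpoints (last vertex of $P$ to last vertex of $Q$, then $Q$ reversed), and these are within distance $\eps$ because every paired walk matches them.
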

\begin{proof}
    By concatenating $P$ and $Q$, we obtain a path of at most $\ell(P) + \ell(Q)+\eps$ that traverses all the points of $S$, which is a feasible solution for TSP on $S$.
\end{proof}

\noindent Denote by $P^*, Q^*$ two curves that partition $S$ with $\dfd(P^*,Q^*) = \eps^*$ such that $\max \{\ell(P^*),\ell(Q^*)\}$ is minimized.

\begin{lemma} \label{lem:bound_eps_by_curve_lengths}
    Let $P,Q$ be two curves that partition $S$, such that $\dfd(P,Q) = \dfd(P^*,Q^*) = \eps^*$.
    Denote by $L_P$ (reps. $L_Q$) the maximum length of an edge in $P$ (resp. $Q$). If $|P|,|Q|\ge 2$, then $\eps^*\le L=\max\{L_P,L_Q\}$.
\end{lemma}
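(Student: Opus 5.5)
The proof reduces to \Cref{clm:longest-edge}, which characterizes $\eps^*$ as the furthest nearest-neighbor distance in $S$. It then uses the fact that every vertex of a curve with at least two vertices has an incident edge on that curve. Note that the hypothesis $\dfd(P,Q)=\eps^*$ is not needed beyond fixing the value $\eps^*$. The argument only uses that $P,Q$ partition $S$ and each has at least two vertices.

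\textbf{Step 1.} By \Cref{clm:longest-edge}, $\eps^*$ equals the length of the longest edge $\{w,x\}$ of $\NNG(S)$. Orient it so that $x$ is the nearest neighbor of $w$. Then $\|w-x\|=\eps^*$, and every point $y\in S\setminus\{w\}$ satisfies $\|w-y\|\ge \|w-x\| = \eps^*$.

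\textbf{Step 2.} Since $P,Q$ partition $S$, the point $w$ is a vertex of exactly one of the two curves; by symmetry assume $w\in P$. Because $|P|\ge 2$, $w$ has at least one neighbor along $P$, namely a consecutive vertex $y$ (the previous vertex, or the next one if $w$ is the first vertex of $P$). Since each point of $S$ is used exactly once, $y\ne w$ and $y\in S\setminus\{w\}$. So $\overline{wy}$ is an edge of $P$ with $\|w-y\|\ge\eps^*$ by Step 1.

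\textbf{Step 3.} Hence $L_P\ge \|w-y\|\ge \eps^*$, and therefore $L=\max\{L_P,L_Q\}\ge\eps^*$. If instead $w\in Q$, the same argument applies using $|Q|\ge 2$.

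The only delicate point is Step 1. One must use the correct orientation of the longest $\NNG$ edge, taking $w$ to be the point whose nearest-neighbor distance attains $\eps^*$. With the tie-breaking rule, an $\NNG$ edge $\{w,x\}$ is present because $x$ is the chosen nearest neighbor of $w$ or vice versa, and we pick the endpoint for which this holds. Its nearest-neighbor distance is then exactly $\eps^*$, so every other point is at least $\eps^*$ away. Everything else is immediate from the definition of a curve partitioning $S$.
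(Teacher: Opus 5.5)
Your proof is correct, but it takes a different route from the paper. The paper argues by contradiction with an exchange argument. If every edge of $P$ and $Q$ were shorter than $\eps^*$, split each curve into its odd- and even-indexed vertices. Pairing $P_{\text{odd}}\circ Q_{\text{odd}}$ against $P_{\text{even}}\circ Q_{\text{even}}$ gives a new partition of $S$ whose discrete Fr\'echet distance is at most $L<\eps^*$, since only consecutive vertices of the original curves get matched, and this contradicts optimality. That argument uses only the minimality of $\eps^*$ and never invokes \Cref{clm:longest-edge}. It is therefore self-contained and would survive in settings where $\eps^*$ has no nearest-neighbor characterization. However, it genuinely needs both curves to have at least two vertices, so that all four halves are non-empty, and it says nothing about which curve carries the long edge. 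Your argument instead uses \Cref{clm:longest-edge}, in fact only its constructive direction: $\eps^*$ is at most the furthest nearest-neighbor distance. It is direct and sharper. It pinpoints the long edge as any curve edge incident to the point $w$ of maximal nearest-neighbor distance, so only the curve containing $w$ needs at least two vertices, and that specific curve has an edge of length at least $\eps^*$. This also explains the paper's counterexample after the lemma, where that point is exactly the one forming a singleton curve. As you observe, neither proof actually uses $\dfd(P,Q)=\eps^*$.
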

\begin{proof}
    Assume by contradiction that $\eps^*> L$. That means that all edges of both $P$ and $Q$ are of strictly smaller length than $\eps^*$.
    Denote $P=\{p_1,\dots,p_k\}$ and $Q=\{q_1,\dots,q_m\}$.
    If $k\ge 2$, then the discrete \frechet\ distance between $P_\text{odd}=\{p_1,p_3,p_5,\dots\}$ and $P_\text{even}=\{p_2,p_4,p_6,\dots\}$ is at most $L$. Similarly, if $m\ge 2$, then the \frechet\ distance between $Q_\text{odd}=\{q_1,q_3,q_5,\dots\}$ and $Q_\text{even}=\{q_2,q_4,q_6,\dots\}$ is at most $L$. Therefore, the discrete \frechet\ distance between $P_\text{odd}\circ Q_\text{odd}$ and $P_\text{even}\circ Q_\text{even}$ is at most $L$, in contraction to the optimality of $\eps^*$.
\end{proof}

Note that if one of $P, Q$ is a single vertex, the above lemma may not be correct. Let $S = \{p,q_1,q_2,\dots\}$ such that $\|p-q_i\| = \eps$ for some $\eps > 0$. Let $\|q_i-q_j\|\ll\eps$ for all $q_i, q_j\in Q$. Then $\eps$ is the optimal \frechet\ distance. And yet, the lemma does not hold for $P=\{p\}$ and $Q=\{q_1,q_2,\dots\}$.

An immediate corollary of \Cref{obs:tsp-s-less-sum-any-dividion} and \Cref{lem:bound_eps_by_curve_lengths} is that $\ell(\TSP(S))\le \ell(P)+\ell(Q)+\eps\le 3\cdot \max\{\ell(P),\ell(Q)\}$, and $\ell(\TSP(S))\le 2\cdot (\ell(P)+\ell(Q))$.
Therefore, by \Cref{thm:min-max-length} we have the follwiing corrolary.
\begin{corollary}
    \label{clr:bound-max-curve-length}
    Given a set $S$ of $n$ points in the plane, we can find in $O(n\log n)$ time two curves $P$, $Q$ that partition $S$ such that $\dfd(P,Q) = \eps^*$ and 
    $\ell(P)+\ell(Q)=O(\ell(P^*)+\ell(Q^*))$, or $\max\{\ell(P), \ell(Q)\}=O(\max\{\ell(P^*),\ell(Q^*)\})$.
\end{corollary}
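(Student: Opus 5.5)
The plan is to combine the upper bound of \Cref{thm:min-max-length} (and \Cref{thm:min-sum-length}) with a lower bound on the optimum in terms of $\ell(\TSP(S))$, taken from \Cref{obs:tsp-s-less-sum-any-dividion} and \Cref{lem:bound_eps_by_curve_lengths}. The algorithm is the one of \Cref{thm:min-max-length}, which runs in $O(n\log n)$ time and outputs $P,Q$ with $\dfd(P,Q)=\eps^*$, $\max\{\ell(P),\ell(Q)\}\le 2.75\cdot\ell(\TSP(S))$, and $\ell(P)+\ell(Q)\le 4.75\cdot\ell(\MST(S))\le 4.75\cdot\ell(\TSP(S))$. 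So it suffices to show that $\ell(\TSP(S))$ is at most a constant times the optimal objective value.

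For the lower bound, let $P^*,Q^*$ be an optimal pair for the relevant objective. Both have $\dfd(P^*,Q^*)=\eps^*$, and I assume first that $|P^*|,|Q^*|\ge 2$.
\begin{itemize}
\item By \Cref{lem:bound_eps_by_curve_lengths}, $\eps^*$ is at most the longest edge of $P^*$ or $Q^*$, hence $\eps^*\le\max\{\ell(P^*),\ell(Q^*)\}$.
\item Plugging this into \Cref{obs:tsp-s-less-sum-any-dividion} gives
\[
\ell(\TSP(S))\le \ell(P^*)+\ell(Q^*)+\eps^*\le 3\max\{\ell(P^*),\ell(Q^*)\},
\]
and also $\ell(\TSP(S))\le 2(\ell(P^*)+\ell(Q^*))$.
\end{itemize}
Chaining these with the upper bounds gives ratio $8.25$ for min-max and $9.5$ for min-sum.

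The main obstacle is the degenerate case where an optimal curve, say $P^*=\{p\}$, is a single vertex. Then \Cref{lem:bound_eps_by_curve_lengths} fails, as the example after it shows, and $\eps^*$ need not be bounded by the curve lengths. I would handle this directly. Every $q\in Q^*$ is matched to $p$, so all points lie within distance $\eps^*$ of $p$ and $\ell(P^*)=0$. Since $\eps^*$ is the furthest nearest-neighbour distance (\Cref{clm:longest-edge}), some point $x$ has its nearest neighbour at distance exactly $\eps^*$. Hence $\ell(\TSP(S))\ge\eps^*$ whenever $n\ge 2$. Separately, any path visiting all of $Q^*$ and then $p$ has length at most $\ell(Q^*)+\eps^*$. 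This yields $\ell(\TSP(S))\le\ell(Q^*)+\eps^*$.

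What remains in the degenerate case is to bound $\eps^*$ by $O(\ell(Q^*))$. If $x\neq p$, then $x\in Q^*$ and its nearest neighbour is at distance $\eps^*$, so both edges of $Q^*$ at $x$ have length at least $\eps^*$; as $|Q^*|\ge 2$ (because $n>2$), $\ell(Q^*)\ge\eps^*$. If $x=p$, all of $Q^*$ lies on the sphere of radius $\eps^*$ around $p$, and $\eps^*$ need not be small relative to $\ell(Q^*)$. I expect this subcase to need an argument about the length of the output curves rather than $\TSP(S)$, or a restriction of the corollary to nondegenerate optima. If that fails, I would state the corollary under the assumption $|P^*|,|Q^*|\ge2$, matching the hypothesis of \Cref{lem:bound_eps_by_curve_lengths}, which is how the paper seems to intend it.
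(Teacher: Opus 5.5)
\textbf{Verdict.} Your argument for the case $|P^*|,|Q^*|\ge 2$ is the paper's argument, with the same constants $8.25$ and $9.5$. The paper combines the upper bounds of \Cref{thm:min-max-length} and \Cref{thm:min-sum-length} with $\ell(\TSP(S))\le 3\max\{\ell(P^*),\ell(Q^*)\}$ and $\ell(\TSP(S))\le 2(\ell(P^*)+\ell(Q^*))$, both taken from \Cref{obs:tsp-s-less-sum-any-dividion} and \Cref{lem:bound_eps_by_curve_lengths}. It does nothing more. It applies the lemma to $P^*,Q^*$ even though it has just remarked that the lemma fails when one curve is a single vertex. So the subcase you leave open is a gap in the paper's proof too.

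\textbf{The open subcase is a genuine counterexample.} It cannot be closed by a finer analysis of the output of \Cref{thm:min-max-length}, because for that algorithm the corollary is false. Take $S=\{p,q_1,q_2\}$ with $\|p-q_1\|=\|p-q_2\|=1$ and $\|q_1-q_2\|=\delta\ll 1$.
\begin{itemize}
\item Then $\eps^*=1$, and the optimum is $P^*=(p)$, $Q^*=(q_1,q_2)$, with both max and sum equal to $\delta$.
\item $\MST(S)$ is a path with $p$ at one end. From either leaf as root, the algorithm makes the middle vertex the unique blue vertex.
\item It therefore places $p$ together with the other $q_i$ in a red curve of length $1$, so the ratio $1/\delta$ is unbounded.
\end{itemize}

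\textbf{Two ways to repair it.} You must either add the hypothesis $|P^*|,|Q^*|\ge 2$, as in your fallback, or change the algorithm. Your own case split shows how to do the latter.
\begin{itemize}
\item Once your other subcase covers every situation in which \emph{some} point other than $p$ has nearest-neighbour distance $\eps^*$, the remaining subcase has $p$ as the \emph{unique} point of $S$ with nearest-neighbour distance $\eps^*$. So $p$ is determined by $S$ alone and can be read off $\NNG(S)$.
\item Let $x$ be a point of maximum nearest-neighbour distance. If all of $S$ lies within $\eps^*$ of $x$, form the extra candidate $(\{x\},Q')$, where $Q'$ is a shortcut DFS traversal of $\MST(S\setminus\{x\})$. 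This gives $\dfd(\{x\},Q')=\eps^*$ and $\ell(Q')\le 2\ell(\MST(S\setminus\{x\}))$.
\item Output whichever of this candidate and the pair of \Cref{thm:min-max-length} is better.
\end{itemize}

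\textbf{Why the repaired algorithm works.}
\begin{itemize}
\item If the optimum is nondegenerate, your main argument applies.
\item If $P^*=\{p\}$ and some point other than $p$ attains $\eps^*$, then $\ell(Q^*)\ge\eps^*$. Hence $\ell(\TSP(S))\le\ell(Q^*)+\eps^*\le 2\ell(Q^*)$, and the theorem's pair is within a constant factor.
\item Otherwise $x=p$. Since $Q^*$ is a spanning path of $S\setminus\{p\}$, the candidate costs at most $2\ell(Q^*)$ in both the max and the sum objective.
\end{itemize}
This still runs in $O(n\log n)$ time and proves the corollary as stated.
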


\section{Balancing the number of vertices}\label{sec:balance}

In this section, we address \Cref{prb:balanced-F-TSP} (balanced-\frechet-TSP) under the discrete \frechet\ distance.
Clearly, it is possible that there is no pair of curves that partition $S$ and have both $\dfd(P,Q)\le \eps^*$ (the optimal distance) and $\max\{|P|,|Q|\}=\lceil n/2\rceil$ (see, e.g. \Cref{fig:no_balance}). 

We show that for points in the plane our solution to discrete-\frechet-TSP can be adjusted such that $\max\{|P|,|Q|\}\le \lfloor n/2\rfloor+2$. In other words, we prove the following theorem.

\begin{theorem}\label{thm:balanced}
    Given a set $S$ of points in the plane, there always exists two curves $P,Q$ that partition $S$, such that $\dfd(P,Q)\le \eps^*$, $|P|\ge |Q|$ and $|P|-|Q|\le 4$ (and in case that $n$ is odd, $|P|-|Q|\le 3$). Moreover, such curves can be found in $O(n\log n)$ time.
\end{theorem}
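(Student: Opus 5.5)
Instead of requiring a single orientation for the whole partition, we will work with the star decomposition. By \Cref{clm:longest-edge} and \Cref{cor:decision-alg}, take the spanning forest $\NNG(S)$ of $\Gepso$, which has no isolated vertex. Decompose it into vertex-disjoint stars $S_1,\dots,S_k$, each having at least two vertices and all edges of length at most $\eps^*$. We will make the decomposition \emph{minimal}: repeatedly delete any edge whose removal leaves no isolated vertex. A minimal forest without isolated vertices has no path of length $3$, so every component is a star. The key point is that in a star with at least two vertices, \emph{either} endpoint of the partition can serve as the single side when the star has two vertices. In general, the star contributes to the paired walk as the pair $(\{c\},\mathrm{leaves})$ or as the pair $(\mathrm{leaves},\{c\})$, and either choice is allowed by the definition of a paired walk. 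Concatenating the stars in any order, with each star oriented independently, therefore gives curves $P,Q$ with $\dfd(P,Q)\le\eps^*$. This is exactly the freedom used in the proof of \Cref{lem:g_eps_no_degree_0_exists_division}, except that there the center was always blue.

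The balancing problem then reduces to a signed sum. Star $S_i$ with $m_i$ leaves contributes $\pm(m_i-1)$ to $|P|-|Q|$. In the plane the stars come from $\NNG(S)$ and are subgraphs of the Euclidean MST, so each center has degree at most $5$ (kissing-number argument), hence $m_i\le 5$ and every imbalance $a_i=m_i-1$ lies in $\{0,1,2,3,4\}$. We choose the signs greedily. Process the stars in nonincreasing order of $a_i$, maintain a running difference $D$, and orient each new star so that it moves $D$ toward $0$. That is, we subtract $a_i$ if $D>0$ and add it otherwise. An easy induction shows $|D|\le \max_i a_i\le 4$ throughout. If $|D|$ would exceed the current $a_i$, that would require $D$ and the step to share a sign, which the rule forbids. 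We finally swap $P$ and $Q$ if needed so that $|P|\ge|Q|$. This gives $0\le |P|-|Q|\le 4$. If $n$ is odd, then $|P|-|Q|=n-2|Q|$ is odd, so it is at most $3$. Sorting values in $\{0,\dots,4\}$ is done by bucketing in $O(n)$, so after the $O(n\log n)$ computation of $\NNG(S)$ and the $O(n)$ pruning, the whole procedure runs in $O(n\log n)$.

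The main obstacle is the degree bound $m_i\le 5$, and in particular making it hold for \emph{our} stars despite ties. The pruning step only removes edges, so each star is a subgraph of the tie-broken $\NNG(S)$, which is a subgraph of an EMST. A star center of degree $6$ would require six points pairwise at angles of at least $60^\circ$ around it. With equality cases in mind, the tie-breaking rule must be invoked, or we argue directly that an EMST can be chosen with maximum degree $5$. If this gave only degree $\le 6$, the greedy bound would become $5$, so the degree bound is what delivers the constant $4$. The minimality and pruning step should also be checked: after an edge is deleted the graph still has no isolated vertex, and a terminal forest with no removable edge must consist of stars. This follows because in a path $a$-$b$-$c$-$d$ the middle edge $bc$ would be removable.
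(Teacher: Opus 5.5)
Your proof follows the paper's route. The paper also takes the stars coming from $\NNG(S)$, via the leaf-stripping coloring of \Cref{lem:g_eps_no_degree_0_exists_division}; your pruning to a minimal forest is an equally valid, linear-time way to get stars. It likewise uses that each star can be flipped independently and balances greedily: it flips stars one after another while $|Q|-|P|\ge 5$ and at the end possibly flips everything, whereas you choose each sign toward $0$. The parity remark for odd $n$ is the same. Your sorting is unnecessary: from $|D|\le M=\max_j a_j$, a step of size $a_i\le M$ toward $0$ cannot leave $[-M,M]$, whatever the order.

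The step you flagged as the main obstacle is a genuine gap, and the paper does not close it either. It only asserts that the tie-broken $\NNG(S)$ has maximum degree $5$ ``due to kissing number''; its later remark even says this degree equals the kissing number, which is $6$ in the plane. With the largest-index rule, take a regular hexagon of side $1$ around a center $c$ of largest index. Each hexagon vertex is at distance $1$ from $c$ and from its two hexagon neighbours, and the tie is broken in favour of $c$. So $c$ gets six leaves and the star has imbalance $5$. For $n=7$, both your greedy and the paper's procedure then end with $|P|-|Q|=5>3$. Invoking the tie-breaking rule therefore cannot work, and a degree-$5$ EMST does not help unless you rebuild the forest from that particular tree.

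The repair is a local split, using the angle argument you already have. If a star with center $c$ has six leaves (seven is impossible by your $60^\circ$ argument), two angularly consecutive leaves $u,v$ satisfy $\angle ucv\le 60^\circ$. Hence $\|u-v\|^2\le\|u-c\|^2+\|v-c\|^2-\|u-c\|\,\|v-c\|\le\max\{\|u-c\|,\|v-c\|\}^2\le(\eps^*)^2$. Detach $\{u,v\}$ as a separate $0$-star, leaving a star with four leaves. This costs $O(n)$ and gives $a_i\le 4$ for every star. Your greedy and parity argument then prove the theorem with the stated bounds.
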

\begin{proof}
    Consider the set of stars $S_1,\dots,S_k$ that were obtained in the proof of \Cref{lem:g_eps_no_degree_0_exists_division}, and that can be computed in $O(n\log n)$ time by \Cref{thm:discrete-frechet-TSP}). Each star $S_i$ has a blue vertex $\textbf{blue}(S_i)$ which is connected in the spanning tree $T_C$ to a set of red vertices $\textbf{red}(S_i)$. For \Cref{thm:discrete-frechet-TSP} we are using the nearest neighbor graph $\NNG(S)$, applying the unique nearest neighbor rule, and therefore the maximum degree in $\NNG(S)$ is at most $5$. This is due to kissing number in Euclidean space.

    Therefore, for every $1\le i\le k$, we have $\textbf{red}(S_i)\le 5$.
    
    Notice that flipping the colors of the vertices in a star $S_i$ does not change the correctness of the algorithm, and we still obtain two curves that partition $S$ and have distance at most $\eps^*$. Therefore, we can perform the following procedure. Let $P$ and $Q$ be the curves obtained from the algorithm of \Cref{thm:discrete-frechet-TSP}, then by construction $|Q|\ge|P|$. 
    Set $w=|Q|-|P|$, and iterate over the stars $S_1,\dots,S_k$: if $w\ge 5$, flip the colors in the current star, and update $w$. The algorithm terminates when $w\le 4$, or after $S_k$ is flipped.

    Since initially each star $S_i$ has exactly one vertex in $P$ (its center $\textbf{blue}(S_i)$) and at most $5$ vertices in $Q$ ($\textbf{red}(S_i)$), a flip adds $c=|\textbf{red}(S_i)|-1\le 4$ vertices to $P$, and removes $c$ vertices from $Q$, which in total reduces $w$ by $2c\le 8$. Therefore, after a flip we have $w\ge -3$.
    Assume by contradiction that the algorithm terminates with $w\ge 5$. Then all the stars $S_1,\dots,S_k$ were flipped, but this is a contradiction because in the first step we had $w=|P|-|Q|>0$, and thus after flipping all the stars we have $w=|Q|-|P|<0$.

    We conclude that when the algorithm terminates, we have $-3\le w\le 4$. If $w<0$, we flip all the stars, and get $0\le w= |Q|-|P|\le 4$, as required. Finally, note if $n$ is odd, then $w$ must be odd, and therefore in this case we have $0\le w= |Q|-|P|\le 3$.
\end{proof}

\begin{remark}
    In fact, the maximum degree of the NNG for points in $d$ dimensions is equal the kissing number of spheres in $d$ dimensions, and therefore \Cref{thm:balanced} can be generalized to higher dimensions accordingly.
\end{remark}

\subsection{Relaxing the distance requirement}
The example in \Cref{fig:no_balance} shows a set $S$ for which there is no balanced partition with distance $\eps^*$. However, notice that if we relax the requirement on the distance between the curves and allow it to be up to $2\eps^*$, then we can split each star into smaller stars by connecting pairs of leaves, so that the maximum degree of a star becomes two, which allows for a balanced partition. Below we show that by allowing an even smaller relaxation (the discrete \frechet\ distance will be at most $\sqrt{3}\cdot\eps^*$), we can always obtain optimally balanced curves.

\begin{figure}[h!]
    \centering
    \includegraphics[page=1,scale=0.7]{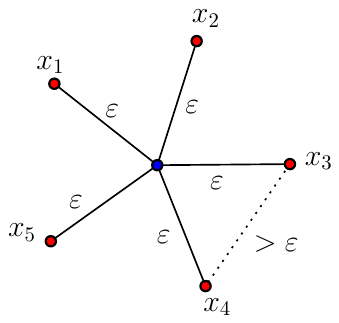}
    \caption{A set of $6$ points for which the optimal solution for \frechet-TSP is $\eps$, $|P|=1$ and $|Q|=5$.}
    \label{fig:no_balance}
\end{figure}

\begin{theorem}\label{thm:balanced-relxed-distance}
    Given a set $S$ of points in the plane, there always exists two curves $P, Q$ that partition $S$, such that $\dfd(P,Q) \le \sqrt3\cdot\eps^*$ and $\max\{|P|,|Q|\}=\lceil n/2 \rceil$. Moreover, such curves can be found in $O(n\log n)$ time.
\end{theorem}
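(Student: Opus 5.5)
We start from the star decomposition $S_1,\dots,S_k$ of \Cref{lem:g_eps_no_degree_0_exists_division}, built on $\NNG(S)$ as in \Cref{thm:discrete-frechet-TSP}. Every star edge has length at most $\eps^*$, and every star has one center and at most $5$ leaves. We will refine these stars into pieces that can be assigned almost freely to $P$ or $Q$, and then balance the counts exactly. Each piece is processed in the order of the original stars, so the paired walk is still a concatenation of pieces.

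\textbf{Splitting the stars.} In a star with center $c$ and leaves $u_1,\dots,u_m$ sorted by angle around $c$, any two leaves are points of the same $\NNG$-forest at distance at most $\eps^*$ from $c$. Two leaves $u_i,u_j$ that are both matched to $c$ in $\NNG(S)$ are each at least as close to $c$ as to each other, so the angle $\angle u_i c u_j$ is at least $60^\circ$. Two angularly consecutive leaves therefore span an angle between $60^\circ$ and $120^\circ$ in at least one consecutive pair, since the $m\le 5$ gaps sum to $360^\circ$. For such a pair, $\|u_i-u_j\|\le 2\eps^*\sin(60^\circ)=\sqrt3\,\eps^*$. The first key step is the following combinatorial statement. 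Every star with $m\ge 2$ leaves can be partitioned into pieces that are either single edges ($2$ vertices) or paths/triangles $u_i,c,u_j$ where every matched pair in the piece is within $\sqrt3\eps^*$. Concretely, a $3$-vertex piece is a star centered at $u_i$ with leaves $c,u_j$, or centered at $c$ with leaves $u_i,u_j$. For $m=2,3,4,5$ we check by cases: we pair leaves with their angular neighbours, using the bound $\|u_i-u_j\|\le\sqrt3\eps^*$ whenever the angle is at most $120^\circ$, and we keep $c$ with one leaf. Among any $\ge 3$ leaves, most consecutive gaps are at most $120^\circ$; in fact at most two gaps can exceed $120^\circ$. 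The resulting pieces are of two kinds: \emph{flexible} pieces of $2$ vertices, contributing $(1,1)$ to $(|P|,|Q|)$, and pieces of $3$ vertices, contributing $(1,2)$ or $(2,1)$ at our choice, because any vertex of a $3$-path can serve as the singleton side if all its distances are at most $\sqrt3\eps^*$.

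\textbf{Balancing.} Once every piece has at most $3$ vertices and each $3$-vertex piece can be oriented either way, we orient the $3$-pieces alternately. This puts the imbalance $|P|-|Q|$ in $\{-1,0,1\}$, so $\max\{|P|,|Q|\}=\lceil n/2\rceil$. The paired walk concatenates the pieces in star order and costs at most $\sqrt3\eps^*$. The running time is the $O(n\log n)$ cost of $\NNG(S)$ plus $O(1)$ work per star.

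\textbf{Main obstacle.} The main obstacle is that a $3$-vertex piece must really be orientable both ways. Orienting a path $u_i,c,u_j$ as center $c$ uses distances at most $\eps^*$. Orienting it as a two-vertex side facing one vertex needs, for example, $\{u_i,u_j\}$ against $c$ (fine) or $\{c,u_j\}$ against $u_i$, which needs $\|u_i-u_j\|\le\sqrt3\eps^*$. So we must choose $u_i,u_j$ with angle at most $120^\circ$ at $c$. The delicate case is a star with $2$ leaves at an angle above $120^\circ$. We handle this case with the alternative $3$-vertex orientation $\{c\}$ against $\{u_i,u_j\}$, which contributes $(1,2)$. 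To obtain $(2,1)$ instead, we put $\{u_i,u_j\}$ in $P$ and $c$ in $Q$: it is the same star, and only the side changes. This shows that any star piece is orientable by flipping colors, and the only real constraint is that pieces have at most $3$ vertices. So the key lemma reduces to splitting each star with $m\ge3$ leaves into pieces of size $2$ or $3$, where any extra pieces not containing $c$ are leaf pairs at distance at most $\sqrt3\eps^*$. I expect the case $m=5$ with uneven angles to need the most care, and I would verify it via the $60^\circ$ lower bound on the angular gaps.
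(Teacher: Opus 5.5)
Your proposal is correct and follows essentially the paper's argument. The paper also repeatedly splits off a pair of leaves that subtend an angle of at most $120^\circ$ at the center (so they are within $\sqrt3\,\eps^*$ by the law of cosines) until every star has at most two leaves, and then balances by flipping the colors of stars. The $60^\circ$ lower bound and the orientability worry are unnecessary: flipping the colors of any star preserves its cost, and the $m=5$ case follows by applying the $120^\circ$ pigeonhole twice, first among the five leaves and then among the remaining three.
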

\begin{proof}
    Consider the set of stars $S_1,\dots,S_k$ that were obtained in the proof of \Cref{lem:g_eps_no_degree_0_exists_division}. If the degree of each star is at most two (i.e. $|\textbf{red}(S_i)|\le 2$ for every $1\le i\le k$), then by applying arguments similar to the proof of \Cref{thm:balanced}, we get that $\max\{|P|,|Q|\}=\lceil n/2 \rceil$ as required.
    
    Otherwise, let $S_i$ be a star such that $|\textbf{red}(S_i)|\ge 3$. 
    Let $c=\textbf{blue}(S_i)$, then there is at least one pair of nodes $u,v\in \textbf{red}(S_i)$ such that the smaller angle $\angle ucv$ at at most $\frac{2\pi}{3}$.
    Thus, by the law of cosines, the distance between them is $$\|v-u\| = \sqrt{\|c-v\|^2+\|c-u\|^2- 2\cdot \cos(\frac{2\pi}{3}) \cdot \|c-v\|\cdot \|c-u\|}.$$ Since the length of any edge in $S_i$ is at most $\eps^*$, we have $$\|v-u\| \leq \sqrt{2(\eps^*)^2 - 2(\eps^*)^2\cos(\frac{2\pi}{3})} = \sqrt{2(\eps^*)^2 + (\eps^*)^2} = \sqrt{3}\cdot\eps^*.$$
    We now split $S_i$ into two stars: remove $v,u$ from $\textbf{red}(S_i)$ and create a new star, $S_{k+1}$, with $\textbf{red}(S_{k+1})= \{v\}$ and $\textbf{blue}(S_{k+1})=u$.
        
    We continue this process until all our stars have degree at most two.
    Notice that any star that we add consists of a single pair of points with distance at most $\sqrt{3}\cdot\eps^*$. Therefore, when applying the arguments from the proof of \Cref{thm:balanced} as before, we obtain $\max\{|P|,|Q|\}=\lceil n/2 \rceil$, and $\dfd(P,Q) \le \sqrt3\cdot\eps^*$.
\end{proof}

\begin{figure}[h!]
    \centering
    \includegraphics[page=2,scale=0.7]{figures/no_balance.pdf}
    \caption{A set of $4$ points for which the optimal solution for \frechet-TSP is $\eps$, $|P|=1$ and $|Q|=3$. By allowing the distance to be $\sqrt3\eps$, we can obtain a solution where $|P|=|Q|=3$.}
    \label{fig:best_relax}
\end{figure}
The example in \Cref{fig:best_relax} shows that the bounds in \Cref{thm:balanced} and \Cref{thm:balanced-relxed-distance} are tight, as we conclude in the observation below.
\begin{observation}
    There exists a set $S$ of points in the plane such that: (i) for any two curves $P,Q$ that partition $S$ and have $\dfd(P,Q)\le\eps^*$ it holds that $|P|-|Q|\ge 4$, and (ii) for any two curves $P,Q$ that partition $S$ and have $|P|-|Q|= 0$ it holds that $\dfd(P,Q)\le\sqrt3\cdot\eps^*$.
\end{observation}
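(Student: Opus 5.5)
The plan is to exhibit one explicit point set, the configuration of \Cref{fig:best_relax} (and its natural extension), and to check (i) and (ii) directly from the definitions, using \Cref{clm:longest-edge} to pin down $\eps^*$.

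\textbf{The point set.} Let $c$ be a center point, and place the remaining points $u_1,\dots,u_m$ around $c$ so that:
\begin{itemize}
\item each $u_j$ has $\|c-u_j\|=\eps$;
\item every two of them satisfy $\eps<\|u_i-u_j\|\le\sqrt3\,\eps$.
\end{itemize}
For $m=3$ this is the equilateral configuration of \Cref{fig:best_relax}: three points at angles $0,\tfrac{2\pi}{3},\tfrac{4\pi}{3}$ on the circle of radius $\eps$ around $c$. Then $\|u_i-u_j\|=\sqrt3\,\eps$ by the law of cosines, the same computation as in the proof of \Cref{thm:balanced-relxed-distance}. The nearest neighbor of every $u_j$ is $c$, so by \Cref{clm:longest-edge} we get $\eps^*=\eps$.

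\textbf{Part (i).} Take any partition $P,Q$ with $\dfd(P,Q)\le\eps^*$ and a paired walk of cost at most $\eps$. Every $u_j$ must be matched to some point at distance at most $\eps$, and the only such point is $c$. Hence every $u_j$ lies on the curve opposite to $c$, so wlog $P=\{c\}$ and $Q=\{u_1,\dots,u_m\}$. The imbalance is therefore forced to equal $m-1$, and I would derive it exactly this way.

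\textbf{Part (ii).} All pairwise distances in $S$ are at most $\sqrt3\,\eps^*$: either $\eps$ for pairs involving $c$, or at most $\sqrt3\,\eps$ for pairs of $u_j$'s. Hence every pair in every paired walk between any two curves on $S$ costs at most $\sqrt3\,\eps^*$. In particular every balanced partition has $\dfd(P,Q)\le\sqrt3\,\eps^*$, which is exactly the "for any" form of (ii) as worded. Combined with \Cref{thm:balanced-relxed-distance}, this shows that the relaxation factor is attained on this instance.

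\textbf{Main obstacle.} The hard step is reconciling the constant $4$ in (i) with (ii) on the same set. For imbalance $4$ under even $n$ we need $m=5$, but then (ii) requires the whole set to have diameter at most $\sqrt3\,\eps^*$. That means five points within distance $\eps$ of $c$, pairwise farther than $\eps$ apart, yet pairwise within $\sqrt3\,\eps$ of each other.

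\begin{itemize}
\item \emph{First attempt.} Try a regular pentagon of radius $r$ and look for a window of $r$. Adjacent points are at distance $2r\sin 36^\circ$ and non-adjacent points at $2r\sin 72^\circ$. Since $\eps^*=r$ after rescaling, non-adjacent pairs sit at about $1.90\,\eps^*>\sqrt3\,\eps^*$, so this fails.
\item \emph{Irregular placements.} I would next try varying the radii and angles. I expect an angular-gap argument to rule these out as well: some two $u_j$'s are separated by an angle of at least $144^\circ$.
\end{itemize}

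So I expect the clean single-set argument to go through with $m=3$, giving the four-point set of the figure. In that case the forced difference in (i) is $2$, consistent with the figure caption "$|P|=1$, $|Q|=3$". The constant $4$ in (i) seems to come from the six-point example of \Cref{fig:no_balance}, which matches \Cref{thm:balanced} but not (ii). I would present the proof for the four-point set as above, and state the forced imbalance as $|S|-2$.
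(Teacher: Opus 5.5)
Your checks on the four-point set are correct, and that set is exactly what the paper offers in support of the observation (\Cref{fig:best_relax}). You are also right that it only forces $\bigl||P|-|Q|\bigr|=2$, so the paper's figure does not by itself justify the constant $4$ either. Part (i) must indeed be read with an absolute value, since swapping $P$ and $Q$ preserves $\dfd$.

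The genuine gap is your conclusion that the constant $4$ cannot coexist with (ii), which makes you prove a strictly weaker statement. The angular-gap argument you anticipate does not work. An angle of at least $144^\circ$ at $c$ forces a distance above $\sqrt3\,\eps^*$ only when both leaves lie at distance close to $\eps^*$ from $c$, and nothing requires that.

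Here is a six-point witness, at scale $\eps^*=1$:
\begin{itemize}
\item Let $V_0,\dots,V_4$ be a regular pentagon, in cyclic order, with side $s=1.01$. Its circumradius is $s/(2\sin 36^\circ)\approx 0.859$ and its diagonal is $\varphi s\approx 1.634<\sqrt3$, where $\varphi=(1+\sqrt5)/2$.
\item Put $c$ on the ray from $V_0$ through the center of the pentagon, at distance exactly $1$ from $V_0$.
\item Then $\|c-V_1\|=\|c-V_4\|\approx 0.913$ and $\|c-V_2\|=\|c-V_3\|\approx 0.750$, while any two of the $V_i$ are at distance at least $1.01$.
\item So every $V_i$ has $c$ as its unique nearest neighbor, the longest edge of $\NNG(S)$ is $cV_0$, and $\eps^*=1$ by \Cref{clm:longest-edge}.
\item Your matching argument from part (i) now forces the partition $\{c\}$ versus $\{V_0,\dots,V_4\}$, an imbalance of $4$.
\item Your diameter argument from part (ii) applies verbatim, because the diameter of $S$ is $\varphi s<\sqrt3\,\eps^*$.
\end{itemize}
So the statement as worded is true with six points. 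The missing idea is simply to move $c$ off the center of the pentagon, i.e.\ to let some leaves sit strictly closer than $\eps^*$ to $c$.

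Separately, your closing remark that (ii) ``shows that the relaxation factor is attained'' does not follow from an upper bound. For the four-point set, attainment is the reverse inequality, which is easy to add. If each curve has two points, the leaf sharing $c$'s curve can only be matched to leaves on the other curve, at distance $\sqrt3\,\eps^*$. The six-point set above does not have this lower-bound property: the $0$-stars $\{c,V_0\}$, $\{V_1,V_2\}$, $\{V_3,V_4\}$ give balanced curves at distance $1.01\,\eps^*$. So it witnesses (ii) only in the upper-bound form in which (ii) is stated.
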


\section{Tighter balancing}
In the proof of \Cref{thm:balanced} we flip the colors of some stars in order to obtain a difference of four between $|P|$ and $|Q|$. However, for some instances there might still be a different way to arrange the stars (i.e., a different paired-walk) that allows for a more balanced partition. Therefore, to further improve the difference, we now show how to split some of the starts (as we do in the proof of \Cref{thm:balanced-relxed-distance}) in order to reduce the difference between $|P|$ and $|Q|$, while still having $\dfd(P,Q)\le\eps^*$.

Given a star subgraph $H$ of $\Gepso$, define its \dfn{balance} as $\b(H)=\textbf{deg}(H)-1$, where $\textbf{deg}(H)$ is the degree of the center node of $H$. 
We call a star $H$ with balance $\b(H)$ a $\b(H)$-star (e.g., a 0-star, a 1-star, etc.). For a set $X$ of star subgraphs of $\Gepso$, denote $\b(X)=\sum_{H\in X}\b(H)$.

Let $\alpha=\{H_1,\dots,H_t\}$ be a set of disjoint stars $H_1,\dots,H_t$ in $\Gepso$ that together cover $S$. We call such a set a \dfn{star cover} of $S$. Consider a valid coloring of the stars in $\alpha$ in two colors, then as in the proof of \Cref{lem:g_eps_no_degree_0_exists_division} this coloring yields two curves that partition $S$ and a paired-walk along them with cost at most $\eps$. Intuitively, $\b(S_i)$ is the amount that a star contributes to either $|P|$ and $|Q|$. Therefore, the following problem is equivalent to \Cref{prb:balanced-F-TSP} (balanced-\frechet-TSP).
\begin{problem}\label{prb:star-cover}
    Given a set $S$ of $n$ points in the plane and a value $\eps>0$, find a star cover $\alpha=H_1,\dots,H_t$ of $S$, that can be partitioned into two sets of stars $\rho$ and $\beta$, that minimizes the difference $$\Delta_\alpha(\rho,\beta)=\left|\b(\rho)-\b(\beta)\right|.$$
\end{problem}

We thus focus on solving the problem defined above. For a given star cover $\alpha$, an \dfn{optimally balanced partition} of $\alpha$ is a partition of $\alpha$ into two sets of stars $\rho$ and $\beta$ that minimizes $\Delta_\alpha\coloneq\Delta_\alpha(\rho,\beta)$.
\fullversion{In \Cref{sec:tighter-balancing}}

Below, we show that when $|S|=n$ is even,
any star cover $\alpha$ of $S$ with $\Delta_\alpha=4$ consists of only $0$-stars and $4$-stars.\fullversion{(see \Cref{{lem:D=4 cant have smaller stars}}).} We then present an algorithm that runs in $O(n^2)$ time, and either returns a star cover $\alpha$ of $S$ with $\Delta_{\alpha} \le 2$, or reports (correctly) that the optimal star cover has $\Delta_{\alpha}=4$.
\begin{restatable}{theorem}{balancedEven}  \label{thm:balanced-even}
    Given a set $S$ of $n$ points in the plane, where $n$ is even, there exists an algorithm that runs in $O(n^2)$ time, which returns a star cover $\alpha$ of $S$ s.t. either:
    \begin{enumerate}
        \item $\Delta_\alpha = 4$ and it is the minimum possible value of any star cover of $S$ (i.e. $\alpha$ must be optimal).
        \item $\Delta_\alpha \leq 2$.
    \end{enumerate}
    
\end{restatable}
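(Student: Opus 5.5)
We begin from the star cover $\alpha_0$ given by \Cref{thm:balanced}: it comes from $\NNG(S)$, can be computed in $O(n\log n)$ time, every star has degree at most $5$ (so balance between $0$ and $4$), and after flipping we have $\Delta_{\alpha_0}\le 4$. Since $n$ is even, $\b(\rho)-\b(\beta)$ and $n$ have the same parity, so $\Delta_{\alpha_0}$ is even and lies in $\{0,2,4\}$. If $\Delta_{\alpha_0}\le 2$ we output $\alpha_0$ and are done. The remaining work is the case $\Delta_{\alpha_0}=4$.

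\textbf{Structural step.} We first show that if a star cover $\alpha$ has $\Delta_\alpha=4$ when $n$ is even, then every star of $\alpha$ is a $0$-star or a $4$-star. Take the greedy argument from \Cref{thm:balanced} in the form of a subset-sum over balances in $\{0,\dots,4\}$. Suppose some star has balance $b\in\{1,2,3\}$. Move it to the other side, or swap it against a $4$-star, starting from an optimally balanced partition with difference $4$. Moving a $b$-star changes the difference by $2b$, so the difference $4$ becomes $|4-2b|\le 2$. This contradicts optimality. Hence every star has balance $0$ or $4$. The number of $4$-stars must then be odd: with $m$ four-stars the best split has difference $4(m\bmod 2)$.

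\textbf{Algorithmic step.} By the structural step, $\alpha_0$ consists of $0$-stars (matched pairs) and an odd number of $4$-stars. Its centers have degree $5$, and their leaves lie at angles about $2\pi/5$ apart within distance $\eps^*$. To beat $4$ we must reach a star cover containing some star of balance $1$, $2$ or $3$, or with an even number of $4$-stars. We try local modifications in $\Gepso$:
\begin{itemize}
\item detach a leaf of a $4$-star and attach it to another center within distance $\eps^*$, which can turn a $4$/$0$ pair into a $3$/$1$ pair;
\item re-pair a leaf with a free point of a $0$-star, turning the $0$-star $\{x,y\}$ into a star centered at $x$ or $y$;
\item more generally, look for an alternating path in $\Gepso$ from a leaf of a $4$-star that ends by raising the degree of some other star. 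This is in the spirit of augmenting paths for matchings or star covers.
\end{itemize}
Each search is a BFS in $\Gepso$ restricted to the relevant neighborhoods. Since $\Gepso$ may have $\Theta(n^2)$ edges, building it and running a linear number of such searches in total gives $O(n^2)$. If any search succeeds, the new cover contains a star of balance in $\{1,2,3\}$ or an even number of $4$-stars, so by the parity analysis its optimal split gives $\Delta\le 2$.

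\textbf{Main obstacle.} The hard part is correctness in the failure case. We must show that if no local or alternating-path modification exists, then every star cover of $S$ consists only of $0$- and $4$-stars with an odd number of $4$-stars, so $\Delta=4$ is optimal.

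The approach I would take is exchange-based. Take a hypothetical cover $\alpha'$ with $\Delta_{\alpha'}\le 2$ and consider the symmetric difference of $\alpha_0$ and $\alpha'$ as a union of alternating components, as in Berge's lemma. Some component must change either the multiset of balances or the parity of the number of $4$-stars, and such a component yields exactly one of the augmenting structures that the search looks for, contradicting its failure.

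The geometric facts used here are the following:
\begin{itemize}
\item a degree-$5$ center in $\Gepso$ forces its leaves to be pairwise far, since they are its NNG-children;
\item each $4$-star is rigid: in the plane its leaves have no other close neighbors except through the configurations the search detects.
\end{itemize}
These facts keep the case analysis finite. I expect the case analysis of how a component can interact with a rigid $4$-star to be the longest part of the proof.
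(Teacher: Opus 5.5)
Your structural step is correct; it is the paper's \Cref{lem:D=4 cant have smaller stars}. Your first two local moves are essentially the paper's Case~2. The gap is the part that carries the theorem, namely correctness when the search fails: you only announce it, and for the search you describe it is false. Every search you propose is anchored at a leaf of a $4$-star, so it cannot detect improvements that only regroup $0$-stars. For a concrete instance, take a point $c$ with five points at distance $\eps^*$ from it, at angles $2\pi/5$ apart; their pairwise distance is $2\eps^*\sin(\pi/5)>\eps^*$. Far away, add six collinear points $x_1,\dots,x_6$ with consecutive gaps $\eps^*$. Then $\alpha_0$ consists of the $4$-star at $c$ and the $0$-stars $\{x_1,x_2\},\{x_3,x_4\},\{x_5,x_6\}$, with $\Delta=4$. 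The $4$-star is an entire component of $\Gepso$, so all your searches fail. Yet the stars $\{x_2;x_1,x_3\}$ and $\{x_5;x_4,x_6\}$ give $\Delta=|4-2|=2$.

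Three further problems compound this. First, the Berge-type exchange you invoke is not available off the shelf. The symmetric difference of two star covers is not a union of alternating paths and cycles, since a centre can have many leaves in either cover. So ``some component yields one of the augmenting structures'' restates the theorem rather than proving a step. Second, your geometric facts are false. Two leaves of a $4$-star can be within $\eps^*$: NNG-children of a common centre are only guaranteed to subtend an angle of at least $\pi/3$, and their distances to the centre may be far below $\eps^*$. The ``rigidity'' of $4$-stars is exactly what needs proof. Third, a linear number of BFS runs in a graph with $\Theta(n^2)$ edges does not obviously fit in $O(n^2)$.

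The missing idea is that no global augmentation is needed. A single edge of $\Gepso\setminus E(\alpha)$ touching a $4$-star already gives a local fix, and if no such edge exists, the $4$-stars are rigid for a trivial reason. The cases are as follows.
\begin{enumerate}
\item An extra edge between two leaves of one $4$-star splits off a $0$-star and leaves a $2$-star.
\item An extra edge between leaves of two different $4$-stars gives two $3$-stars and a new $0$-star.
\item An extra edge from a leaf of a $4$-star to a vertex of a $0$-star gives a $3$-star and a $1$-star.
\item An extra edge at a $4$-star centre gives that centre six neighbours within $\eps^*$. In the plane two of them subtend an angle at most $\pi/3$, so they are within $\eps^*$ of each other, which reduces to the leaf cases.
\end{enumerate}
Any resulting star of balance $1$, $2$ or $3$ forces $\Delta\le 2$ by the structural lemma and parity.

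If no extra edge touches a $4$-star, each $4$-star is an entire $K_{1,5}$ component of $\Gepso$, so it lies in every star cover. The question then reduces to regrouping the $0$-stars among themselves. \Cref{lem:balance:0-star-permutation-reduction} decides this via two explicit patterns: a vertex adjacent to both ends of another $0$-star, giving a $2$-star, or a chain through three $0$-stars, giving two $1$-stars. \Cref{obs:balance-sum-delta-of-subsets-deltas} then combines the outcome with the odd block of $4$-stars ($4-2=2$). All of these checks are a single scan of the edges of $\Gepso$, which is where the $O(n^2)$ bound comes from.
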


\newenvironment{caseenum}{%
  \begin{enumerate}[{\bfseries {Case} 1 -}]
  \setlength{\leftskip}{-1.5cm}
}{%
  \end{enumerate}
}

Observe that if the maximum degree of the stars in a star cover $\alpha$ is constant, then an optimally balanced partition of $\alpha$ can be found in linear time, because in this case it reduces to a bounded variant of the subset sum problem \cite{P99linear}.
Therefore, by taking $\alpha$ to be the set of stars from the proof of \Cref{lem:g_eps_no_degree_0_exists_division} (in which the maximum degree is $5$), we can compute in linear time an optimally balanced partition of $\alpha$.
By \Cref{thm:balanced}, we have $\Delta_\alpha\le 4$ (there always exists a coloring with a difference of $4$).
Note that for even values of $n$, we have $\Delta_\alpha \in \{0,2,4\}$, and for odd values of $n$ we have $\Delta_\alpha \in \{1,3\}$.
We thus conclude that if $\Delta_\alpha \in \{0,1\}$, then $\alpha$ already gives an optimal solution for \Cref{prb:star-cover}. We therefore assume that $2\le \Delta_\alpha\le 4$.

Denote by $W$ the set of edges of stars in $\alpha$. Clearly, if $W$ consist of all the edges in $\Gepso$, then $\alpha$ is the only possible star cover for $S$, and by the optimality of the partition of $\alpha$, it is also an optimal solution for \Cref{prb:star-cover}.
We therefore assume that there is at least one edge in $\Gepso$ which is not in $W$.

Since $\alpha$ is partitioned optimally and $\Delta_\alpha>1$, the only way to obtain a better solution for \Cref{prb:star-cover} is by using a different set of stars.
In other words, the optimal solution for the problem must contain edges from $\Gepso\setminus E(\alpha)$.
We give algorithms to find this optimal solution (if exists). For convenience, we name the kissing number in Euclidean space as \dfn{kissing property}.

The following auxiliary lemma will be useful in two of our main theorems below.

\begin{lemma}\label{lem:balance:0-star-permutation-reduction}
        Let $\alpha_0$ be a star cover of a set $\hat{S}\subseteq S$, that consists only of $0$-stars. Then given $\Gepso$ as an input, there is an algorithm that runs in $O(|\Gepso|)$ time and either finds a star cover $\hat{\alpha}$ of $\hat{S}$ and a partition $\rho,\beta$ of $\hat{\alpha}$ with $\Delta_\alpha(\rho,\beta)=2$, or reports (correctly) that any star cover $\hat{\alpha}$ of $\hat{S}$ consists of only $0$-stars (and thus for any partition $\rho,\beta$ of  $\hat{\alpha}$ we have $\Delta_\alpha(\rho,\beta)=0$).
\end{lemma}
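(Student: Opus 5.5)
Since $\alpha_0$ consists only of $0$-stars, it is a perfect matching $M$ of $\hat S$ inside $\Gepso$, and every star cover of $\hat S$ has total size $|\hat S|$ with $\b(\hat\alpha)=|\hat S|-2|\hat\alpha|$. A star cover that is not all $0$-stars must therefore contain a star of degree at least $2$. The first step is to reduce the question to a local one: I claim that $\hat S$ has a star cover containing a non-$0$-star if and only if it has one consisting of exactly one $1$-star (a path on three vertices, $x-c-y$) together with $0$-stars and possibly one more $1$-star. Once we have such a cover $\hat\alpha$, the partition $\rho=\{$the $1$-star$\}$, $\beta=\hat\alpha\setminus\rho$ gives $\Delta=1$ or, with two $1$-stars on one side, $\Delta=2$. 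Parity forces $\b(\hat\alpha)\equiv|\hat S|\pmod 2$, so the relevant object is a cover with two $1$-stars (even $|\hat S|$) placed on the same side, giving $\Delta_\alpha(\rho,\beta)=2$.

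For the search I will use alternating paths with respect to $M$. Given an edge $\{c,y\}\in\Gepso\setminus M$ with $c$ matched to $x$ and $y$ matched to $y'$, we can form the $1$-star $x-c-y$, leaving $y'$ uncovered. We then follow an $M$-alternating path from $y'$: take a non-matching edge $\{y',z\}$, $z$ matched to $z'$, and so on, until we reach a vertex $w$ whose freed partner can join a star, i.e.\ a non-matching edge from the free vertex to some vertex $u$ that becomes the center of a second $1$-star $u-u''$ plus the free vertex. Equivalently, non-$0$-star covers correspond to spanning subgraphs in which we replace $M$ by $M\triangle A$ for an alternating path $A$ whose two endpoints are absorbed into stars. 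Any such structure can be shortened to a minimal one, so it suffices to decide whether there exist two vertices $a,b$, each having a non-matching edge to a matched vertex, joined by an even alternating path. This can be done with a single BFS over alternating paths in $\Gepso$, in $O(|\Gepso|)$ time. In a general, non-bipartite graph, blossoms could complicate alternating BFS, so I would instead argue the simpler structural fact directly.

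\textbf{Simpler structural fact.} Every star cover of $\hat S$ is all $0$-stars if and only if $\Gepso[\hat S]$ itself is a perfect matching, i.e.\ every vertex of $\hat S$ has degree $1$ in $\Gepso[\hat S]$.
\begin{itemize}
\item If some vertex $v$ has a second neighbor $y$ (besides its partner $x$), take the $1$-star $x-v-y$. This leaves $y'$ (the partner of $y$) free.
\item If $y'$ has a neighbor other than $y$, attach it to that neighbor's pair to form a second $1$-star, or repeat the step.
\item If $y'$ has no other neighbor, use $y'-y$ together with $v$ and $x$ as a three-edge structure. This is the delicate case: it is a path $x-v-y-y'$, which can be covered only by $0$-stars.
\end{itemize}
So the real condition is that $\Gepso[\hat S]$ contains a vertex of degree at least $2$ whose extra edge creates a star without stranding a vertex. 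The main obstacle is exactly this stranding: a path $P_4$ component admits only its matching cover. My plan is therefore to check each connected component of $\Gepso[\hat S]$. A component admits a non-$0$-star cover if and only if it has a spanning forest with no isolated vertex that is not a perfect matching. For a connected component with at least $3$ vertices that is not a path on $4$ vertices, the coloring procedure of \Cref{lem:g_eps_no_degree_0_exists_division} applied to a suitably rooted spanning tree, for instance a BFS tree rooted at a vertex of degree at least $3$, or a longer path/cycle, produces a star of degree at least $2$. I would verify the finite exceptional shapes (paths of even length, cycles) by hand.

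Finally, I combine two such components, or one component yielding two $1$-stars, to reach $\Delta=2$ by parity, placing the two odd-balance stars on the same side and all $0$-stars arbitrarily. Each component is handled by BFS, so the total running time is $O(|\Gepso|)$.
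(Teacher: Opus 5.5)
Your proposal has a genuine gap: the structural criterion your algorithm finally rests on is false. You assert that every connected component of $\Gepso[\hat S]$ with at least three vertices, apart from $P_4$ and a finite list of paths and cycles, admits a star cover with a non-$0$-star. The exceptions are in fact neither finite nor confined to paths and cycles. Attach a pendant vertex $b_i$ to every vertex $a_i$ of an arbitrary connected graph, and let $\alpha_0=\{\{a_i,b_i\}\}$. Each $b_i$ has $a_i$ as its only neighbor, so it must share a star with $a_i$. A star centered at $a_i$ with a further leaf $a_j$ would leave $b_j$ uncovered. Hence $\alpha_0$ is the \emph{only} star cover, although the component can be arbitrarily large and contain vertices of degree $\ge 3$. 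This is realizable in $\Gepso$: take $a_1=(0,0)$, $a_2=(1,0)$, $a_3=(2,0)$, $b_1=(-1,0)$, $b_2=(1,1)$, $b_3=(3,0)$, which gives $\eps^*=1$.

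There are further problems. Your intermediate equivalence (``non-$0$-star cover iff there is a spanning forest without isolated vertices that is not a perfect matching'') is refuted by your own $P_4$. Your first reduction to covers with two $1$-stars misses covers whose only non-$0$-star is a $2$-star, for example a triangle $u,a,b$ with a pendant $v$ at $u$ and $\alpha_0=\{\{u,v\},\{a,b\}\}$. Finally, even when a spanning-tree coloring does produce a non-$0$-star, you do not control its balance: a single $4$-star alongside $0$-stars gives $\Delta=4$, not $2$.

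The missing idea is a \emph{local} characterization, which is what the paper uses. A cover with $\Delta=2$ exists iff one of two conditions holds.
\begin{enumerate}
\item Some $u_1$ in a pair $\{u_1,v_1\}\in\alpha_0$ is adjacent to both endpoints of another pair $\{u_2,v_2\}$. Then $u_1$ becomes the center of a $2$-star with leaves $v_1,u_2,v_2$.
\item Some pair $\{u_1,v_1\}$ has $u_1$ adjacent to a vertex $u_2$ of a second pair and $v_1$ adjacent to a vertex $v_3$ of a third, distinct pair. Then dissolve $\{u_1,v_1\}$ into the two $1$-stars centered at $u_2$ and $v_3$, and place both on the same side.
\end{enumerate}
Both conditions are checked by one scan of the edges in $O(|\Gepso|)$ time.

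For the converse, suppose neither condition holds. Then every pair either contains a vertex of degree $1$ in $\Gepso[\hat S]$, or it forms an isolated $4$-cycle together with exactly one other pair. So each component is either such a $4$-cycle, which is checked directly, or a component in which every pair has a pendant vertex. In the latter case the pendant argument above applies verbatim: a leaf $\ell$ of a star with two leaves, other than the center's partner, would have its pendant partner stranded. Hence every star cover consists of $0$-stars.
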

\begin{proof}
        First notice that $|\hat{S}|$ must be even. We show that there exists a star cover $\hat{\alpha}$ of $\hat{S}$ and a partition $\rho,\beta$ of $\hat{\alpha}$ with $\Delta_\alpha(\rho,\beta)=2$ if and only if one of the following conditions hold:
        \begin{enumerate}
            \item There exists two $0$-stars $S_i=\{u_1,v_1\},S_j=\{u_2,v_2\}$ in $\alpha_0$ such that $\|u_1-u_2\|\le \eps$ and $\|u_1-v_2\|\le \eps$.
            \item There exists three $0$-stars $S_i=\{u_1,v_1\},S_j=\{u_2,v_2\},S_k=\{u_3,v_3\}$ in $\alpha_0$ such that $\|u_1-u_2\|\le \eps$ and $\|v_1-v_3\|\le \eps$.
        \end{enumerate}
\begin{figure}[h!]
    \centering
    \includegraphics[scale=1]{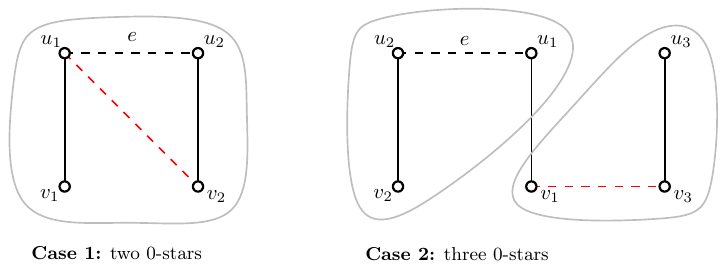}
    \caption{$a^0$ star partition cases}
    \label{fig:o-star-permutation-cases}
\end{figure}

First we show that if one of the conditions hold then there exists a star cover $\hat{\alpha}$ of $\hat{S}$ and a partition $\rho,\beta$ of $\hat{\alpha}$ with $\Delta_\alpha(\rho,\beta)=2$.
In the first case (\Cref{fig:o-star-permutation-cases}, left) we remove the edge $\{u_2,v_2\}$ and connect the vertices $u_2,v_2$ to $S_i$, so $S_i$ becomes a 2-star. This result in a star cover with a single 2-star and some number of 0-stars, which can be partitioned into $\rho,\beta$ with $\Delta_\alpha(\rho,\beta)=2$.
For the second case, we remove the edge $\{u_1,v_1\}$ and connect $u_1$ to $u_2$ and $v_1$ to $v_3$, which result in two 1-stars instead. Therefore the new set of stars can be partitioned into $\rho,\beta$ with $\Delta_\alpha(\rho,\beta)=2$.

For the second direction, assume that there exists a star partition $\hat{\alpha}$ of $\hat{S}$ and a partition $\rho,\beta$ of $\hat{\alpha}$ with $\Delta_\alpha(\rho,\beta)=2$. Then there exists at least one edge $e=\{u_1,u_2\}\notin E(\hat{\alpha})$ which belongs to a non-0-star. Let $S_i = \{u_1, v_1\}$ and $S_j = \{u_2, v_2\}$ be the stars in $\hat{\alpha}$ that cover the vertices $u_1,u_2$. Assume by contradiction that both conditions do not hold, so $\min\{\|u_1-v_2\|,\|u_2-v_1\|\}>\eps^*$, there is no vertex $v_3\neq v_2$ such that $\|v_1 - v_3 \| \le \eps^*$ and no vertex $v_4\neq v_1$ such that $\|v_2-v_4\|\}\le\eps^*$.

Notice that $E(\hat{\alpha})$ can contain at most one of the edges $\{u_1, v_1\},\{u_2, v_2\}$. If $E(\hat{\alpha})$ contains only the edge $\{u_2, v_2\}$, then $v_1$ must be in another star, but there is no other vertex at distance at most $\eps^*$ from it. Similarly, it cannot be that $E(\hat{\alpha})$ contains only the edge $\{u_1, v_1\}$.
If both edges $\{u_2,v_2\},\{u_1,v_1\}$ are not in $E(\hat{\alpha})$, then it must be that $\{v_1,v_2\}$ is a 0-star in $\hat{\alpha}$, and because $e$ is not a 0-star, there must be some other edge connecting either $u_1$ of $u_2$ to some other vertex $v\notin\{v_1,v_2,u_1,u_2\}$ which belongs to some star $S_k$ in $\alpha^0$. In this case the second condition holds for $S_i$, $S_j$ and $S_k$.
\end{proof}

The observation below will be useful when combining two partial covers of $S$.
\begin{observation}\label{obs:balance-sum-delta-of-subsets-deltas}
    Let $S$ be a set of points, and $S_1,S_2$ a partition of $S$ into two sets of points. Let $\alpha_1,\alpha_2$ be star covers of $S_1,S_2$, respectively, with partitions $\rho_1,\beta_1$ of $\alpha_1$ and $\rho_2,\beta_2$ of $\alpha_2$. Let $\alpha=\alpha_1\cup\alpha_2$, then
    \begin{itemize}
        \item $\rho=\rho_1\cup \rho_2, \beta=\beta_1\cup \beta_2$ is a partition of $\alpha$ with $\Delta_\alpha(\rho,\beta)=\Delta_\alpha(\rho_1,\beta_1)+\Delta_\alpha(\rho_2,\beta_2)$, and 
        \item $\rho=\rho_1\cup \beta_2, \beta=\beta_1\cup \rho_2$ is a partition of $\alpha$ with $\Delta_\alpha(\rho,\beta)=\Delta_\alpha(\rho_1,\beta_1)-\Delta_\alpha(\rho_2,\beta_2)$.
    \end{itemize}
\end{observation}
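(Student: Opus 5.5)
The plan is to reduce both items to additivity of the balance function $\b$ over disjoint unions of stars, and then to fix the sign of each partial difference. Since $S_1,S_2$ partition $S$ and $\alpha_1,\alpha_2$ are star covers of $S_1,S_2$ respectively, the stars of $\alpha_1$ and $\alpha_2$ are pairwise vertex-disjoint. Hence $\alpha=\alpha_1\cup\alpha_2$ is a star cover of $S$. Also, each of the two proposed pairs $(\rho,\beta)$ is a genuine partition of $\alpha$, because $\rho_1,\beta_1$ partition $\alpha_1$ and $\rho_2,\beta_2$ partition $\alpha_2$.

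Next I use the definition $\b(X)=\sum_{H\in X}\b(H)$. For disjoint sets of stars $X,Y$ this gives $\b(X\cup Y)=\b(X)+\b(Y)$. Before computing, I normalize the labels. The quantity $\Delta_\alpha(\rho_i,\beta_i)=|\b(\rho_i)-\b(\beta_i)|$ is symmetric in $\rho_i,\beta_i$, so without loss of generality I name the parts so that $\b(\rho_i)\ge\b(\beta_i)$ for $i=1,2$. This convention is the implicit assumption behind the signs in the statement, and I will state it explicitly. Under it, $\Delta_\alpha(\rho_i,\beta_i)=\b(\rho_i)-\b(\beta_i)$ with no absolute value.

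For the first item, additivity gives
$$\b(\rho_1\cup\rho_2)-\b(\beta_1\cup\beta_2)=\bigl(\b(\rho_1)-\b(\beta_1)\bigr)+\bigl(\b(\rho_2)-\b(\beta_2)\bigr).$$
Both summands are nonnegative, so taking the absolute value yields $\Delta_\alpha(\rho,\beta)=\Delta_\alpha(\rho_1,\beta_1)+\Delta_\alpha(\rho_2,\beta_2)$.

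For the second item, additivity gives
$$\b(\rho_1\cup\beta_2)-\b(\beta_1\cup\rho_2)=\bigl(\b(\rho_1)-\b(\beta_1)\bigr)-\bigl(\b(\rho_2)-\b(\beta_2)\bigr).$$
Taking the absolute value gives $\Delta_\alpha(\rho,\beta)=|\Delta_\alpha(\rho_1,\beta_1)-\Delta_\alpha(\rho_2,\beta_2)|$. This equals the stated difference once I also order the two covers so that $\Delta_\alpha(\rho_1,\beta_1)\ge\Delta_\alpha(\rho_2,\beta_2)$.

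The only real obstacle is this sign and labeling bookkeeping. The statement as written omits the absolute value, so I will be explicit that the normalization $\b(\rho_i)\ge\b(\beta_i)$, together with the ordering of the two covers, is what makes the displayed identities exact. The rest is direct computation.
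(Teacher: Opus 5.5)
Your proposal is correct and is the argument the paper intends. The paper states this as an observation without proof, and it follows from additivity of $\b$ over the disjoint union $\alpha_1\cup\alpha_2$, exactly as you compute.

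Your normalization ($\b(\rho_i)\ge\b(\beta_i)$ for $i=1,2$, and $\Delta_\alpha(\rho_1,\beta_1)\ge\Delta_\alpha(\rho_2,\beta_2)$ for the second item) is a necessary repair, not just bookkeeping. As literally written, the first identity fails when $\b(\rho_1)-\b(\beta_1)$ and $\b(\rho_2)-\b(\beta_2)$ have opposite signs, and the second right-hand side can be negative. Your oriented form is what the paper actually uses in Case 5, where it obtains $4-2=2$.
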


\noindent We now turn to inspect the different cases for the value of $\Delta_\alpha$.

\subsection{Even Number of Points}
In this section, we handle the case in which $\Delta_\alpha = 4$. Notice that this means that $n$ is even.
Below, we present an algorithm that runs in $O(n^2)$ time which either decides that $\alpha$ is an optimal solution for \Cref{prb:star-cover}, or finds a star partition $\alpha'$ for which $\Delta_{\alpha'} < 4$. We begin by showing that there can only be two types of stars in such star cover $\alpha$.

\begin{lemma}\label{lem:D=4 cant have smaller stars}
    Let $\alpha$ be a star cover of $S$, with $\Delta_\alpha=4$. Then there is no star $S_i \in \alpha$ such that $\b(S_i)\in \{1,2,3\}$. In other words, $\alpha$ only consists of $0$-stars and $4$-stars.
\end{lemma}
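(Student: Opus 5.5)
We argue by contradiction, working with the balances of the stars in $\alpha$ as numbers. Here $\Delta_\alpha$ is the minimum over all partitions of $\alpha$ of $|\b(\rho)-\b(\beta)|$. By the kissing property (the setting of \Cref{thm:balanced}, where stars come from $\Gepso$ in the plane and have center degree at most $5$), every star has balance in $\{0,1,2,3,4\}$. Since $\Delta_\alpha=4$, the total $B=\b(\alpha)$ is even, because $B\equiv\Delta_\alpha \pmod 2$.

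The key tool is a greedy or exchange argument on the multiset of balances. Take an optimal partition $\rho,\beta$ with $\b(\rho)-\b(\beta)=4$. Suppose some star $H$ has $\b(H)=b\in\{1,2,3\}$. If $H\in\rho$, moving it to $\beta$ changes the difference to $4-2b\in\{2,0,-2\}$, whose absolute value is less than $4$. This contradicts $\Delta_\alpha=4$. So every star in $\rho$ with balance in $\{1,2,3\}$ is excluded, and $\rho$ contains only $0$-stars and $4$-stars.

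Next suppose $H\in\beta$ has $\b(H)=b\in\{1,2,3\}$. We look for a swap. Since $\b(\rho)=\b(\beta)+4>0$, the set $\rho$ contains at least one $4$-star $H'$. Swapping $H$ and $H'$ changes the difference to $4-2(4-b)=2b-4\in\{-2,0,2\}$, again with absolute value less than $4$, which is a contradiction. Hence neither side contains a star of balance $1$, $2$ or $3$, and $\alpha$ consists only of $0$-stars and $4$-stars.

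The only subtle point is that the swap step needs $\rho$ to contain a star of balance $4$. This follows once $\rho$ is known to contain only $0$- and $4$-stars and to have positive total balance. I also need to confirm the degree bound of $5$ for arbitrary star covers in $\Gepso$, not only for the NNG stars. A star whose center has $6$ or more leaves within distance $\eps^*$ seems possible in general. If that happens, I would handle balances $\ge 5$ in the same way: moving such a star, or swapping it with a star on the other side, reduces $|\cdot|$ unless the balances already fit the claim. Alternatively, I would restrict attention to the star covers the paper considers, which come from the degree-bounded $\NNG(S)$. I expect this bounding issue to be the main obstacle; the exchange arguments themselves are routine.
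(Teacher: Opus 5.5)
Your exchange argument is the paper's proof: moving a $c$-star ($c\in\{1,2,3\}$) off the heavier side gives $|4-2c|<4$, and otherwise swapping a $c$-star on the lighter side with a $4$-star on the heavier side gives $|2c-4|<4$. On the degree issue you raise, only your second fix works, namely restricting to the degree-bounded covers. The paper's proof also tacitly assumes every center has degree at most $5$, which holds for the $\NNG$-derived covers it actually uses. Without that assumption the statement is false: a center with six hexagonally placed leaves at distance $\eps^*$, plus a far-away collinear triple with spacing $\eps^*$ (a $1$-star), has balances $\{5,1\}$, so $\Delta_\alpha=4$ while a $1$-star is present. So your first fix, handling balances $\ge 5$ ``in the same way,'' cannot succeed.
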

\begin{proof}
    Let $\rho, \beta$ be an optimally balanced partition of $\alpha$. Assume w.l.o.g. that $\b(\rho) > \b(\beta)$. 
    Let $c \in \{1,2,3\}$. Assume by contradiction that $\rho$ contains a $c$-star $S_i$.
    By removing $S_i$ from $\rho$ and adding it to $\beta$, we obtain another partition of $\alpha$ into two sets $\rho' = \rho\setminus\{S_i\}$ and $\beta'= \beta\cup\{S_i\}$, where $\Delta_\alpha(\rho',\beta')=|\b(\rho')-\b(\beta')|=|\b(\rho)-c-(\b(\beta)+c)|=|\b(\rho)-\b(\beta)-2c|=|4-2c| < 4$, a contradiction the optimality of the partition $\rho, \beta$. We thus conclude that $\rho$ can contain only $0$-stars and $4$-stars.
    
    Since $\b(\rho) > \b(\beta)$, $\rho$ must contain at least one $4$-star, $S_k$.
    Assume by contradiction that $\beta$ contains a $c$-star $S_j$, and consider the partition of $\alpha$ into two sets $\rho''=(\rho\setminus\{S_k\})\cup\{S_j\}$ and $\beta''=(\beta\setminus\{S_j\})\cup\{S_k\}$, where $\Delta_\alpha(\rho'',\beta'')=|\b(\rho'')-\b(\beta'')|=|\b(\rho)-4+c-(\b(\beta)-c+4)|=|\b(\rho)-\b(\beta)+2c|=|4-2c| < 4$. Again, this is a contradiction to the optimality of the partition $\rho, \beta$.
\end{proof}

\balancedEven*\label{thm:balanced-even-proof}
\begin{proof}
    We start by computing some star cover $\alpha$ for $S$, with an optimal partition $\rho, \beta$. If $\Delta_\alpha < 4$ then $\Delta_\alpha \leq 2$ and we are done.
    Let $\Delta_\alpha = 4$. We run the following case analysis on the edges of $\Gepso\setminus E(\alpha)$ (see \Cref{fig:balanced-cases-even}). In each case, we show that there is only a certain local change to the stars in $\alpha$ that would result in another star cover $\alpha'$ with $\Delta_{\alpha'} \le 2$. That is, if such specific local change is not possible then $\alpha$ is optimal.
\begin{figure}
    \centering
    \includegraphics[scale=0.7]{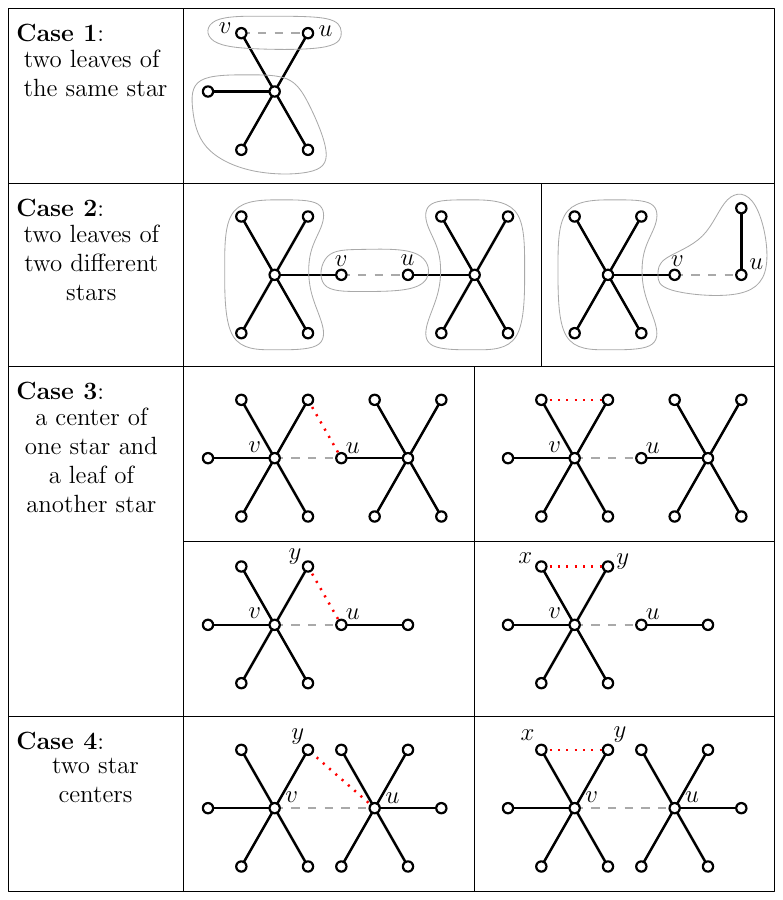}
    \caption{Even number of points - Balance cases on 4-stars}
    \label{fig:balanced-cases-even}
\end{figure}

    There are four different cases: 
    \begin{enumerate}[{\bfseries {Case} 1 -}]
    \setlength{\itemsep}{0.5em}
        \item \label{4d:case:leaf-same-star} There exists an edge $\{u,v\}\in \Gepso\setminus E(\alpha)$ which connects two leaves of a $4$-star $S_i$. In this case we remove $u,v$ from $S_i$, and add $\{u,v\}$ as a new $0$-star to $\alpha$. After removing $u,v$ from $S_i$, it becomes a $2$-star. 
        \item \label{4d:case:leaf-diff-star} There exists an edge $\{u,v\}\in \Gepso\setminus E(\alpha)$ which connects a leaf $v$ of a $4$-star $S_i$ with a leaf $u$ 
        of a star $S_j$. We remove $v$ from $S_i$, so it becomes a $3$-star.
        \begin{enumerate}[(i)]
            \item If $S_j$ is a $4$-star, then we also remove $u$ from $S_j$ (so it becomes a $3$-star), and we add $\{u,v\}$ as a new $0$-star to $\alpha$.
            \item Else, $S_j$ is a $0$-star, so we add $u$ to $S_j$ and it becomes a $1$-star.
        \end{enumerate}
        \item \label{4d:case:leaf-center} There exists an edge $\{u,v\} \in \Gepso\setminus E(\alpha)$ which connects a center $v$ of a $4$-star $S_i$ with a leaf $u$ of a star $S_j$. In this case, we get that $S_i\cup\{v,u\}$ is a $5$-star, which is a star of degree $6$. By the \dfn{kissing property}, there must be an edge $\{x,y\}$ in $\Gepso\setminus E(\alpha)$ between two leaves of that star. 
        \begin{enumerate}[(i)]
            \item If one of $x,y$ is $u$, then since $u$ is a leaf of $S_j$ it means that we are in case \ref{4d:case:leaf-diff-star}.
            \item Else, if both $x,y$ are leaves of $S_i$, the we are in Case \ref{4d:case:leaf-same-star}.
        \end{enumerate}
        \item \label{4d:case:center-center} There exists an edge $\{u,v\} \in \Gepso \setminus E(\alpha)$ which connects a center $v$ of a $4$-star $S_i$ with a center $u$ of a $4$-star $S_j$. Similar to the previous case, we get that $S_i\cup\{v,u\}$ is a star of degree $6$, and thus there must be an edge $\{x,y\}$ between two leaves of that star.
        \begin{enumerate}[(i)]
            \item If one of $x,y$ is $u$, then since $u$ is a center of $S_j$ it means that we are in case \ref{4d:case:leaf-center}.
            \item Else, if both $x,y$ are leaves of $S_i$, the we are in case \ref{4d:case:leaf-same-star}.
        \end{enumerate} 
        \item \textbf{0-Stars - } If there is an edge $\{u,v\}\in \Gepso \setminus E(\alpha)$ that belongs to one of these cases, then the updated star cover $\alpha'$ contains a $c$-star for some $c\in\{1,2,3\}$, and thus by \Cref{lem:D=4 cant have smaller stars} it cannot have $\Delta_{\alpha'}=4$. Therefore, since $n$ is even, we get $\Delta_{\alpha'}\le 2$. Otherwise, if no edge of $\Gepso \setminus E(\alpha)$ belongs to one of the above case, then the only edges in $\Gepso \setminus E(\alpha)$ are edges that connect vertices of $0$-stars.    
        Denote by $\alpha_0\subseteq\alpha$ the set of $0$-stars from $\alpha$. Then $\alpha_4=\alpha\setminus\alpha_0$ must have $\Delta_{\alpha^4}=4$, meaning that $\alpha_4$ consists of an odd number of $4$-stars. Moreover, note that any star cover of $S$ must contain $\alpha_4$. 
        Let $\hat{S}$ be the set of vertices of stars from $\alpha_0$. We apply \Cref{lem:balance:0-star-permutation-reduction} on $\alpha_0$ and $\hat{S}$. If we got a star cover $\hat{\alpha}$ of $\hat{S}$ with a partition $\rho,\beta$ such that $\Delta_{\hat{\alpha}}(\rho,\beta)=2$, then by \Cref{obs:balance-sum-delta-of-subsets-deltas}, $\alpha'=\alpha_4 \cup \hat{\alpha}$ is a star cover of $S$ with $\Delta_{\alpha'}\le 2$. Otherwise, by \Cref{lem:balance:0-star-permutation-reduction}, any star cover $\hat{\alpha}$ over $\hat{S}$ has $\Delta_{\hat{\alpha}}=0$, and since any star cover of $S$ must contain $\alpha^4$, there is no star cover $\alpha$ over $S$ with $\Delta_{\alpha} < 4$.        
    \end{enumerate}    
    To check if one of the cases above occurs, we simply scan all the edges of $\Gepso \setminus E(\alpha)$ which can be done in $O(|\Gepso|)$ time.
    The running time of the algorithm is $O(n^2)$, as each modification is local.
\end{proof}

\section{Continuous \frechet-TSP is hard}\label{sec:continuous}
As mentioned in the introduction, Buchin and Kilgus~\cite{BK22} show that the continuous \frechet\ distance between 2 point sets is NP-hard.
Similarly, we show that the \frechet-TSP problem under the continuous \frechet\ distance is also NP-hard. The reduction is quite similar to the reduction described in~\cite{BK22}, and for completeness, we include a sketch of the proof.
We reduce from a restricted version of 3-SAT, called $(3,B2)$-SAT, in which each clause contains exactly three literals, and each literal (that is, a variable or its negation) appears exactly twice in the entire formula. As in~\cite{BK22}, we assume that in the $(3,B2)$-SAT instance no two clauses share more than one literal (this is possible by the construction in \cite{BKS03}).

Given a $(3,B2)$-SAT formula, we construct a set $S$ of points in the plane such that the formula is satisfiable if and only if there exist two curves $P,Q$ that partition $S$ and have $\df(P,Q)\le \eps$.

Our construction is as follows. Each clause in the formula is represented by a single point, where three line segments intersect, with one segment for each literal in the clause.
This is possible because no two clauses share more than one literal. Since each literal appears at most twice in the formula, the line segment corresponding to a literal passes through the two points representing the clauses containing that literal.
The variable gadgets are then created in a way for each clause point, one curve must visit it, and the other curves must pass close to it. This is possible because no two clauses share more than one literal. Since each literal appears in at most two clauses, the line segment corresponding to a literal passes through the two points representing the clauses containing that literal. 
This ensures that each clause is ``covered'' by the appropriate combination of curves according to the truth assignment encoded by the variable gadgets.

To make this construction concrete, we now describe the variable gadgets in detail. For every input variable $x$ in the formula, we build a gadget similar to the one described in \cite{BK22}. 
Each gadget has two points $x_1, x_2$ that represent the positive literal, and two points $\bar x_1, \bar x_2$ that represent the negative literal, see \Cref{fig:Variable X gadget} below.

\begin{figure}[ht]
    \centering
    \includegraphics[page=1,,scale=0.8]{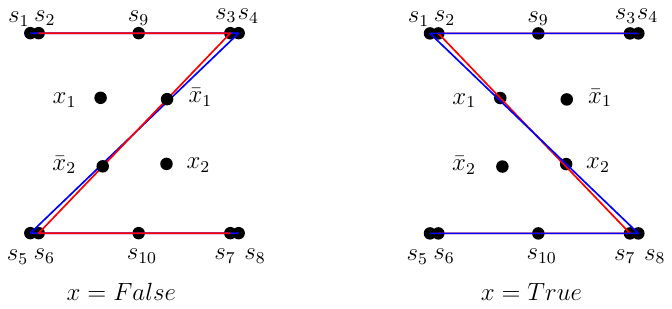}
    \caption{Variable $x$ gadget}
    \label{fig:Variable X gadget}
\end{figure}

The key difference for the 1-set case, compared to the original two–point-set case in ~\cite{BK22}, is that the vertices are not yet assigned to the curves.
In order to force the partition to be as in~\cite{BK22}, we place the points at the corners of the variable gadgets at distance $\eps>0$, for small enough $\eps$. If the formula is stisfiable, then the optimal \frechet\ distance will be $\eps$,
because each pair of corner points can be matched while still passing through all the clause points, as illustrated in \Cref{fig:Variable X gadget}.

More precisely, the gadgets consists of four pairs of points $(s_1,s_2)$, $(s_3,s_4)$, $(s_5,s_6)$ and $(s_7,s_8)$, each pair of points is placed so the distance between them is very small $\eps$. This ensures that both curves are forced to pass through all four corners, and, in order to preserve the uniqueness of each point along the curves, they cannot complete a full loop around the gadget and return to their starting corner. As a result, the traversal of the gadget is restricted, so that each curve may follow only one of the two diagonals. This restriction allows the gadget to enforce that exactly one of the line segments corresponding to $x$ or $\bar{x}$ is visited by one of the curves (and the other one passes close enough to it), corresponding to the chosen assignment of the 3SAT variable.

For the clauses in the $(3,B2)$-Linear-SAT formula, we intersect the 3 different gadgets in one of the $x_1, x_2, \bar x_1, \bar x_2$ representing the variables in the clause, this is constructed in the same way as \cite{BK22}, as once the variable gadget are built the difference of the points being preassigned a curve or not does not affect the construction anymore, \Cref{fig:sat-to-1-set-continuous-frechet}.

\begin{figure}[h]
    \centering
    \includegraphics[page=2,,scale=0.8]{figures/continues-1-set-NP-all.pdf}
    \caption{$(3,B2)$-SAT to 1 set continuous \frechet\ problem}
    \label{fig:sat-to-1-set-continuous-frechet}
\end{figure}

\begin{theorem}
    There is no polynomial-time solution for uniquely partitioning $S$ into curves $P,Q$ with minimal \frechet\ distance, unless $P=NP$.
\end{theorem}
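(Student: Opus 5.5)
The plan is a polynomial-time reduction from $(3,B2)$-SAT (with the extra property, via \cite{BKS03}, that no two clauses share more than one literal) to the decision version of continuous \frechet-TSP. Given a formula, I will build the point set $S$ described above: one variable gadget per variable, with corner pairs $(s_1,s_2),\dots,(s_7,s_8)$ placed at mutual distance $\eps$, literal points $x_1,x_2,\bar x_1,\bar x_2$, and one point per clause placed where the three literal segments of its literals meet, exactly as in \cite{BK22}. The claim to prove is that the formula is satisfiable if and only if there are curves $P,Q$ partitioning $S$ with $\df(P,Q)\le\eps$. The construction is clearly polynomial, so NP-hardness follows.

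\emph{Forward direction.} Fix a satisfying assignment. In each gadget, split every corner pair so that one point goes to $P$ and the other to $Q$. Let $P$ traverse the diagonal of the true literal, visiting the clause points on it that are not yet assigned, and let $Q$ traverse the other diagonal, passing within $\eps$ of every point $P$ visits. Since every clause has a true literal, every clause point is visited by exactly one curve. The leash argument from \cite{BK22} then gives $\df(P,Q)\le\eps$ gadget by gadget. Concatenating the gadgets in a fixed order, joined through nearby corner pairs, yields global curves.

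\emph{Backward direction.} Given $P,Q$ with $\df(P,Q)\le\eps$, I first show that the partition is forced into the \cite{BK22} form.
\begin{enumerate}[(i)]
\item Each corner pair lies far from every other point. The points of a pair therefore cannot both lie on one curve, since the other curve would then have to pass within $\eps$ of both at the moments they are visited, which is impossible because nothing else is nearby. So each pair is split between $P$ and $Q$, and both curves visit all four corners of every gadget.
\item Each point is used once and the corners are far apart, so a curve cannot close a loop around a gadget. Hence each curve follows exactly one diagonal of each gadget, and the two curves follow different diagonals.
\end{enumerate}
Reading the truth value from the diagonal followed by the curve that visits the literal points gives an assignment. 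Every clause point must lie on some curve, and it is reachable within distance $\eps$ only along segments of its own literals. Hence each clause contains a literal whose diagonal is the chosen one, so the assignment is satisfying. From here the argument is verbatim from \cite{BK22}, because once the corner assignment is forced the instance is a preassigned two-point-set instance.

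The main obstacle is step (i): proving that the unassigned points are forced into the \cite{BK22} split. In particular, I must rule out alternatives such as one curve collecting both points of a pair, clause points being matched to a segment of the other curve in an unintended way, or a curve jumping between gadgets. My approach is to choose $\eps$ much smaller than every inter-feature distance in the construction and argue through the continuous leash: any time $t$ at which one curve sits at a point of a pair forces the other curve to be within $\eps$ of that point at time $t$. I expect to need a careful spacing lemma here.
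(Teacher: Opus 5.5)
\textbf{Main gap: the two curves must share a diagonal.} Your overall plan matches the paper's sketch: reduce from $(3,B2)$-SAT with the \cite{BKS03} restriction, reuse the \cite{BK22} gadgets, and pin down the unassigned points by corner pairs at distance $\eps$. The problem is the mechanism you describe inside a gadget. In the forward direction you let $P$ run along the diagonal of the true literal and $Q$ along the \emph{other} diagonal. In (ii) you conclude that ``the two curves follow different diagonals.''

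This cannot happen under $\df(P,Q)\le\eps$. That condition keeps the two curves within $\eps$ of each other at every moment. When $P$ sits at a literal/clause point in the middle of its diagonal, $Q$ would have to be within $\eps$ of that point. A curve running along the other literal segment is not, since the two segments of a gadget are far apart away from their crossing; otherwise the gadget could not distinguish $x$ from $\bar x$. Your own phrase ``passing within $\eps$ of every point $P$ visits'' is incompatible with $Q$ being on the other diagonal.

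So your forward construction does not achieve distance $\le\eps$. The conclusion of (ii) contradicts the hypothesis of the backward direction, and the assignment you read off from it is not well defined. The paper's intended mechanism is that both curves take the \emph{same} diagonal in lockstep:
\begin{enumerate}[(a)]
\item They leave a split corner pair together, one point on each curve, and arrive together at the next corner pair.
\item The not-yet-visited clause points on the true literal's segment become vertices of one curve.
\item The other curve's edge along that same segment passes within $\eps$ of them (``visited by one of the curves, and the other one passes close enough to it'').
\end{enumerate}
The truth value of $x$ is then the diagonal taken by the \emph{pair} of curves. The backward direction must show that the two curves cannot split onto different routes, not that they must.

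\textbf{Secondary point: step (i) uses a discrete argument.} Under the continuous \frechet\ distance, a point may be matched to the interior of an edge of the other curve. An edge joining two distant points of $S$ can pass through the $\eps$-neighbourhood of a corner or clause point even though no vertex lies there. So ``nothing else is nearby'' does not show that each corner pair is split. Likewise, ``reachable within $\eps$ only along segments of its own literals'' does not follow automatically.

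You rightly flag this as the crux. The paper's sketch also defers this point to the placement in \cite{BK22}. A complete argument needs the spacing lemma you mention: it must control which segments between points of $S$ come within $\eps$ of corner and clause points, and show that such unintended edges cannot be used.
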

\begin{remark}
    Partitioning $S$ into curves $P,Q$ with minimal weak \frechet\ distance, also cannot be done in polynomial time unless $P=NP$. This follows directly from the proof over the \frechet\ distance that also applies for the week case.
    Notice that the proof does not apply for the non-unique case, where we allow the same point to repeat on the same curve, but not in both because then we can just take identical curves over all set $S$.
\end{remark}

\newpage

\bibliography{refs}

\appendix

\section{Visiting a point more than once}\label{sec:multiset}
Consider the following variant of the problem, in which we allow a point to be used more than once by the same agent.

\begin{problem}\label{prb:non-unique-F-TSP}
    Given a set $S$ of points in $\reals^d$, find two curves $P$ and $Q$ such that each point in $S$ is a vertex of either $P$ or $Q$ (but not both,and it can appear as a vertex multiple times in one curve), such that $\delta(P,Q)$ is minimized.
\end{problem}

With this variant we may obtain closer routes under the continuous \frechet\ distance as depicted in \Cref{fig:continuous-non-unique}. 
\begin{figure}[ht]
    \centering
    \includegraphics[width=0.3\linewidth]{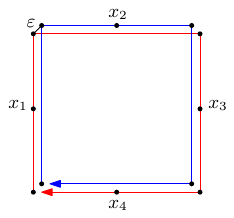}
    \caption{The optimal (continuous) \frechet\ distance that can be achieved when an agent can use a point multiple times is $\eps$. For the distance to be $\eps$, each of the $x_i$'s needs to be matched to an edge of the other curve whose endpoints are above and below it, or to the right and left of it. This is not possible if every point is visited exactly once.}
    \label{fig:continuous-non-unique}
\end{figure}
However, for the discrete distance the two version are equivalent, as we show in \Cref{obs:multiset} below.
\begin{observation}\label{obs:multiset}
       There exist two curves $P,Q$ such that partition $S$ with repetitions (i.e., a point in $S$ can appear more than once in $P$ or $Q$, but it cannot be in both curves) and have $\dfd(P,Q) \le \eps$ if and only if there exist two curves $P',Q'$ that uniquely partition $S$ and have $\dfd(P',Q') \le \eps$. 
\end{observation}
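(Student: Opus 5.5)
The plan is to prove the two directions separately. The direction ``unique $\Rightarrow$ with repetitions'' is immediate: a unique partition $P',Q'$ is a special case of a partition with repetitions, since every point appears in exactly one curve and appears there once. All the work is in the converse. Given $P,Q$ that partition $S$ with repetitions and satisfy $\dfd(P,Q)\le\eps$, I will show that the graph $\Geps$ has no vertex of degree $0$. Then \Cref{lem:g_eps_no_degree_0_exists_division} directly supplies $P',Q'$ that uniquely partition $S$ with $\dfd(P',Q')\le\eps$.

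To show the degree condition, I fix an optimal paired-walk $\pi$ along $P$ and $Q$ with $\text{cost}(\pi)\le\eps$ and take an arbitrary point $v\in S$. Without loss of generality $v$ occurs, possibly several times, as a vertex of $P$ and never as a vertex of $Q$. Pick any occurrence of $v$ in $P$; it lies in some block $P_i$ of $\pi$. The corresponding block $Q_i$ is non-empty, so $v$ is matched to some vertex $q$ of $Q$ with $\|v-q\|\le\eps$.

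The key step is that $q\ne v$ as a point of $S$. This holds because the problem forbids a point from appearing in both curves, and $v$ appears in $P$. Therefore $q$ is a different point of $S$ within distance $\eps$, so $\{v,q\}$ is an edge of $\Geps$ and $v$ has positive degree. Since $v$ was arbitrary, no vertex of $\Geps$ has degree $0$, and \Cref{lem:g_eps_no_degree_0_exists_division} finishes the proof.

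I expect no real obstacle. The only subtle point is making sure that repetitions cannot make a point ``match itself''; this is exactly the use of the ``not in both curves'' condition. It may also be worth remarking, as the paper does, that this argument breaks for the continuous distance, where points can be matched to edge interiors.
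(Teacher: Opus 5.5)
Your proof is correct and follows essentially the same route as the paper: both reduce to \Cref{lem:g_eps_no_degree_0_exists_division} by noting that any occurrence of a point $v$ is matched in the paired-walk to a vertex of the other curve. That vertex is a different point of $S$ within distance $\eps$, so $v$ cannot be isolated in $\Geps$. The paper argues by contradiction (assuming an isolated vertex exists), whereas you argue directly and make explicit the step $q\neq v$ that the paper leaves implicit.
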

\begin{proof}
    One direction is trivial. For the other direction, let $P,Q$ be two curves that partition $S$ non-uniquely and have $\dfd(P,Q) \le \eps$, and assume by contradiction that there is no unique partition $P',Q'$ of $S$ such that $\dfd(P',Q') \le \eps$. 
    By \Cref{lem:g_eps_no_degree_0_exists_division}, there exists a vertex $v$ of degree $0$ in $G_\eps$. Assume w.l.o.g. that $v \in P$. Since $v \notin Q$ and $\|v- u\| > \eps$ for every $u\in S$, we get that $\dfd(P,Q) > \eps$, a contradiction.
\end{proof}

Nevertheless, when the goal is to minimize the lengths of the curves, then these versions are not equivalent, as illustrate in \Cref{fig:optimal-frechet-unique-nonunique-curve-length-example} below.
\begin{figure}[h]
    \centering
    \includegraphics[width=0.3\linewidth]{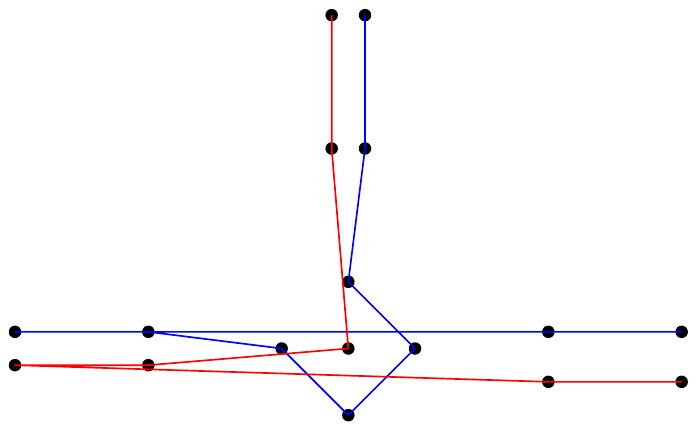} 
    \hspace{1cm}
    \includegraphics[width=0.3\linewidth]{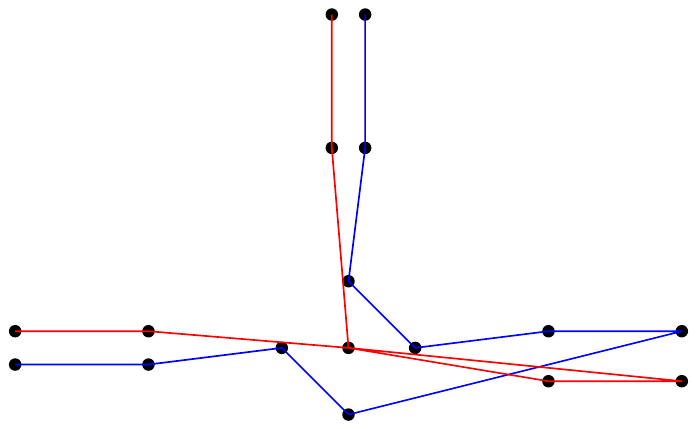}
    \caption{Left: the optimal discrete \frechet\ distance when no repetitions allowed. Right: the length is smaller when repetitions are allowed.}
    \label{fig:optimal-frechet-unique-nonunique-curve-length-example}
\end{figure}

\end{document}